 \definecolor{BLACK}{gray}{0}
 \definecolor{WHITE}{gray}{1}
 \definecolor{RED}{rgb}{1,0,0}
 \definecolor{GREEN}{rgb}{0,1,0}
 \definecolor{BLUE}{rgb}{0,0,1}
 \definecolor{CYAN}{cmyk}{1,0,0,0}
 \definecolor{MAGENTA}{cmyk}{0,1,0,0}
 \definecolor{YELLOW}{cmyk}{0,0,1,0}
\theoremstyle{plain}
\newtheorem{thm}{\protect\theoremname}
\theoremstyle{remark}
\newtheorem{rem}{\protect\remarkname}
\theoremstyle{plain}
\newtheorem{lem}{\protect\lemmaname}
\theoremstyle{plain}
\newtheorem{cor}{\protect\corollaryname}
\theoremstyle{remark}
\newtheorem*{acknowledgement*}{\protect\acknowledgementname}
\providecommand{\acknowledgementname}{Acknowledgement}
\providecommand{\corollaryname}{Corollary}
\providecommand{\lemmaname}{Lemma}
\providecommand{\remarkname}{Remark}
\providecommand{\theoremname}{Theorem}
\begin{document}

\title{{\Large{}Locally distinguishing a maximally entangled basis using
shared entanglement }}

\author{Somshubhro Bandyopadhyay}

\affiliation{Department of Physical Sciences, Bose Institute, Bidhannagar, Kolkata
700091, India }
\email{som@jcbose.ac.in, som.s.bandyopadhyay@gmail.com}

\selectlanguage{english}%

\author{Vincent Russo}

\affiliation{Unitary Fund, 315 Montogomery St, Fl 10 San Francisco, California
94104, USA}
\email{vincent@unitary.fund}

\selectlanguage{english}%
\begin{abstract}
We consider the problem of distinguishing between the elements of
a bipartite maximally entangled orthonormal basis using local operations
and classical communication (LOCC) and a partially entangled state
acting as a resource. We derive an exact formula for the optimum success
probability and find that it corresponds to the fully entangled fraction
of the resource state. The derivation consists of two steps: First,
we consider a relaxation of the problem by replacing LOCC with positive-partial-transpose
(PPT) measurements and establish an upper bound on the success probability
as the solution of a semidefinite program, and then show that this
upper bound is achieved by a teleportation-based LOCC protocol. This
further implies that separable and PPT measurements provide no advantage
over LOCC for this task. We also present lower and upper bounds on
the success probability for distinguishing the elements of an incomplete
orthonormal maximally entangled basis in the same setup. 
\end{abstract}
\maketitle

\section{Introduction }

Distinguishing between quantum states is a fundamental problem in
quantum theory, where the aim is to ascertain the state of a quantum
system promised to be in one of a known set of states (see \citep{Chefles-review-2000,BC-review-2009,Bergou-2010,Bae-Kwek-review-2015}
for excellent reviews). The problem may be understood as follows.
Let $S_{\psi}=\left\{ \left(p_{i},\left|\psi_{i}\right\rangle \right):i=1,\dots,n\right\} $
be a set of quantum states, each state occurring with probability
$p_{i}$ such that $\sum_{i=1}^{n}p_{i}=1$. Now consider a quantum
system prepared in a state chosen from $S_{\psi}$, but we do not
know which one. The objective is to determine which state the system
is in by performing a suitable measurement. By measurement we mean
a positive operator valued measure (POVM) described by a collection
of positive operators satisfying the completeness relation. Now, according
to quantum theory, the given states can be perfectly distinguished
if and only if they are mutually orthogonal, and when they are not,
the best one can do is to perform a measurement which is optimal according
to some well-defined distinguishability measure. One such measure
is the success probability, the maximum probability that the unknown
state is identified correctly and is defined as
\begin{alignat}{1}
p\left(S_{\psi}\right) & =\sup_{\mathbf{M}}\sum_{i=1}^{n}p_{i}\left\langle \psi_{i}\right|M_{i}\left|\psi_{i}\right\rangle ,\label{success-prob}
\end{alignat}
where $\mathbf{M}=\left\{ M_{1},\dots,M_{n}\right\} $ is a measurement
and the supremum is taken over all measurements. Computing the success
probability in general is hard, and exact results have been found
only for specific instances of the problem (see \citep{Chefles-review-2000,BC-review-2009,Bae-Kwek-review-2015}
for comprehensive discussions). 

\subsection*{Local distinguishability}

This problem has also been studied within the \textquotedblleft distant
lab\textquotedblright{} paradigm in quantum information theory \citep{Peres-Wootters-1991,ben99,ben99u,Walgate-2000,Virmani-2001,Ghosh-2001,walgate-2002,HSSH-2003,divin03,Ghosh-2004,fan-2005,Watrous-2005,Nathanson-2005,Wootters-2006,Hayashi-etal-2006,nis06,Duan2007,feng09,Duan-2009,Calsamiglia-2010,BGK-2011,Bandyo-2011,Yu-Duan-2012,Cosentino-2013,Cosentino-Russo-2014,BN-2013,halder-2018,B-IQC-2015,Halder+-2019,Yu-duan-PPT-2014}.
For simplicity, we confine our discussion to bipartite quantum systems
of finite dimension $\mathbb{C}^{d}\otimes\mathbb{C}^{d}$, where
$d\geqslant2$. The given state now belongs to a known set of bipartite
states $S_{\phi}=\left\{ \left(p_{i},\left|\phi_{i}\right\rangle \right):i=1,\dots,n\right\} $
and is shared between two distant observers, Alice and Bob. The goal
is the same as before, but the measurements are restricted to \emph{local
operations and classical communication} (LOCC) \citep{LOCC}, wherein
Alice and Bob perform measurements on their respective local systems
and communicate via classical channels but cannot exchange quantum
information. The primary question of interest is how well LOCC can
distinguish the given states relative to global measurement (measurement
performed on the entire system). For example, if the states $\left|\phi_{i}\right\rangle $
are orthogonal, one would be interested in knowing whether they can
be perfectly distinguished by LOCC, as is always possible by a global
measurement. One motivation for studying this problem is to find out
how much global information encoded in orthogonal states of composite
systems is accessible by local means. Another motivation is to explore
nonlocal properties that may manifest in this setup. 

It turns out that in some instances of the problem, LOCC is optimal,
and in some, it is not. For example, two orthogonal pure states can
be perfectly distinguished by LOCC\footnote{For two orthogonal mixed states, surprisingly, this is not always
true \citep{Bandyo-2011,Yu-duan-PPT-2014}.} irrespective of the number of parties, dimensions, or entanglement
\citep{Walgate-2000}. However, this does not always hold for sets
of three or more orthogonal states. Notable examples of sets whose
members are not perfectly distinguishable by LOCC include the Bell
basis \citep{Ghosh-2001}, orthogonal bases containing entangled states
\citep{HSSH-2003}, unextendible product bases \citep{ben99u,divin03},
orthogonal product bases exhibiting nonlocality without entanglement
\citep{ben99,nis06,feng09,halder-2018} and strong nonlocality without
entanglement \citep{Halder+-2019}. Orthogonal states, which are perfectly
distinguishable by LOCC, are called locally distinguishable. Otherwise,
they are locally indistinguishable. Locally indistinguishable states,
product or otherwise, are said to be nonlocal in the sense a global
measurement can extract more information about the unknown state than
coordinated local measurements alone. These states have also found
novel applications in quantum cryptography primitives, such as secret
sharing \citep{Markham-Sanders-2008} and data hiding \citep{Terhal2001,DiVincenzo2002,Eggeling2002,MatthewsWehnerWinter}. 

Local (LOCC) distinguishability can be similarly quantified by (local)
success probability. Following \eqref{success-prob}, it is defined
as (see also \citep{Navascues-2008})
\begin{alignat}{1}
p_{_{{\rm L}}}\left(S_{\phi}\right) & =\sup_{\bm{\pi}\in{\rm LOCC}}\sum_{i=1}^{n}p_{i}\left\langle \phi_{i}\right|\pi_{i}\left|\phi_{i}\right\rangle ,\label{p-local-3}
\end{alignat}
where $\bm{\pi}=\left\{ \pi_{1},\dots,\pi_{n}\right\} $ is an LOCC
measurement. Computing the local optimum is notoriously hard even
for orthogonal states, for the LOCC class does not admit tractable
characterization. Nevertheless, a few results have been found. For
the members of the Bell basis, given with probabilities $p_{1}\geqslant p_{2}\geqslant p_{3}\geqslant p_{4}>0$,
the success probability is $p_{1}+p_{2}$ \citep{BN-2013,Navascues-2008}
and for any three uniformly distributed Bell states, it is $2/3$
\citep{B-IQC-2015}. In dimensions $\mathbb{C}^{d}\otimes\mathbb{C}^{d}$
for $d\geqslant3$, it has been shown that a set of $n$ maximally
entangled states can be locally distinguished with success probability
at most $d/n$ \citep{Nathanson-2005}, from which it follows that
no more than $d$ maximally entangled states can be perfectly distinguished
by LOCC; however, in some state spaces for $d\geqslant4$, one can
find even smaller locally indistinguishable sets for which $n\leqslant d$
\citep{Yu-Duan-2012,Cosentino-2013,Cosentino-Russo-2014}. 

\subsection*{Local distinguishability with shared entanglement}

By definition, LOCC is a strict subset of all quantum operations that
one may perform on a composite system; for example, LOCC can neither
create entanglement nor increase entanglement on average. The limitations
of LOCC, however, can be overcome with shared entanglement, which
is why entanglement is considered a resource for quantum information
processing tasks \citep{Entanglement-horodecki}. Naturally, one would
like to know how entanglement can help distinguish locally indistinguishable
states by LOCC \citep{B-IQC-2015,Cohen-2008,BGKW-2009,BRW-2010,BHN-2016,BHN-2018,LJ-2020,BR-2021}. 

One of the first questions addressed in this context deals with entanglement
cost, which quantifies how much entanglement must be consumed to perfectly
distinguish locally indistinguishable states by LOCC \citep{B-IQC-2015,Cohen-2008,BGKW-2009,BRW-2010,BHN-2018,BR-2021}.
It has been shown that, for example, a $\mathbb{C}^{2}\otimes\mathbb{C}^{2}$
maximally entangled state is necessary and sufficient to perfectly
distinguish the members of the Bell basis by LOCC \citep{B-IQC-2015}.
The LOCC protocol here mimics quantum teleportation \citep{teleportation}
and is known as the \textquotedblleft teleportation protocol,'' where
one-half of the unknown state is first teleported using the resource
state, followed by a Bell measurement. Likewise, a $\mathbb{C}^{d}\otimes\mathbb{C}^{d}$
maximally entangled state is necessary and sufficient to perfectly
distinguish a $\mathbb{C}^{d}\otimes\mathbb{C}^{d}$ maximally entangled
basis for all $d\geqslant3$ by LOCC. The proof of this fact follows
from a simple application of a result proved in \citep{HSSH-2003}.
Exact results, however, are hard to obtain for generic sets of orthonormal
states, even if they form a basis. Nevertheless, lower bounds have
been found; for example, a lower bound on the entanglement cost of
locally distinguishing an arbitrary orthonormal basis, not necessarily
maximally entangled, is given by the average entanglement of the basis
states \citep{BGKW-2009}, assuming the states are all equally probable.
In $\mathbb{C}^{2}\otimes\mathbb{C}^{2}$, this lower bound can be
improved upon for almost all orthonormal bases \citep{BRW-2010}. 

Instead of asking about entanglement cost, one could ask how well
locally indistinguishable states can be distinguished by LOCC using
shared entanglement as a resource \citep{B-IQC-2015,BHN-2016,BHN-2018,LJ-2020,BR-2021}.
The present work considers a problem of this kind, so let us discuss
the essentials of this problem in some detail. 

Let $\mathcal{A}_{1}=\mathcal{A}_{2}=\mathbb{C}^{d}$ and $\mathcal{B}_{1}=\mathcal{B}_{2}=\mathbb{C}^{d}$
denote the state spaces corresponding to the quantum systems held
by Alice and Bob, respectively. Let
\begin{alignat}{1}
S_{\chi} & =\left\{ \left(p_{i},\left|\chi_{i}\right\rangle \right):i=1,\dots,n\right\} \label{S-chi}
\end{alignat}
be an orthonormal set of locally indistinguishable states, where $\left|\chi_{i}\right\rangle \in\mathcal{A}_{1}\otimes\mathcal{B}_{1}$.
Suppose that Alice and Bob wish to distinguish the elements of $S_{\chi}$
using LOCC and a resource state $\left|\varphi\right\rangle \in\mathcal{A}_{2}\otimes\mathcal{B}_{2}$.
It is easy to see that this boils down to distinguishing between the
elements of the set 
\begin{alignat}{1}
S_{\chi\otimes\varphi} & =\left\{ \left(p_{i},\left|\chi_{i}\right\rangle \otimes\left|\varphi\right\rangle \right):i=1,\dots,n\right\} \label{eS(phiXvarphi)}
\end{alignat}
by LOCC, where $\left|\chi_{i}\right\rangle \otimes\left|\varphi\right\rangle \in\mathcal{A}_{1}\otimes\mathcal{B}_{1}\otimes\mathcal{A}_{2}\otimes\mathcal{B}_{2}$. 

Since Alice holds $\mathcal{A}_{1}$ and $\mathcal{A}_{2}$ and Bob
holds $\mathcal{B}_{1}$ and $\mathcal{B}_{2}$, LOCC is defined with
respect to the bipartition $\mathcal{A}:\mathcal{B}$, where $\mathcal{A}=\mathcal{A}_{1}\otimes\mathcal{A}_{2}$
and $\mathcal{B}=\mathcal{B}_{1}\otimes\mathcal{B}_{2}$. This reflects
the fact that Alice and Bob may perform joint measurements on the
composite systems $\mathcal{A}$ and $\mathcal{B}$, respectively.
So to define the success probability we need to take this into consideration,
which requires expressing the states $\left|\chi_{i}\right\rangle \otimes\left|\varphi\right\rangle $
as states in $\mathcal{A}\otimes\mathcal{B}$ by swapping the systems
$\mathcal{B}_{1}$ and $\mathcal{A}_{2}$. 

Define the unitary swap operator $U_{\mathcal{B}_{1}\leftrightarrow\mathcal{A}_{2}}:\mathcal{A}_{1}\otimes\mathcal{B}_{1}\otimes\mathcal{A}_{2}\otimes\mathcal{B}_{2}\rightarrow\mathcal{A}\otimes\mathcal{B}$
whose action on product states is given by
\begin{alignat}{1}
U_{\mathcal{B}_{1}\leftrightarrow\mathcal{A}_{2}}\left(\left|\alpha_{1}\right\rangle \left|\beta_{1}\right\rangle \left|\alpha_{2}\right\rangle \left|\beta_{2}\right\rangle \right) & =\left|\alpha_{1}\right\rangle \left|\alpha_{2}\right\rangle \left|\beta_{1}\right\rangle \left|\beta_{2}\right\rangle \label{eU-swap}
\end{alignat}
 for all vectors $\left|\alpha_{1}\right\rangle \in\mathcal{A}_{1}$,
$\left|\alpha_{2}\right\rangle \in\mathcal{A}_{2}$ , $\left|\beta_{1}\right\rangle \in\mathcal{B}_{1}$,
$\left|\beta_{2}\right\rangle \in\mathcal{B}_{2}$. 

Let $\left|\chi_{i}\right\rangle \otimes\left|\varphi\right\rangle \rightarrow\left|\xi_{i}\right\rangle \in\mathcal{A}\otimes\mathcal{B}$
for each $i=1,\dots,n$, where $\left|\xi_{i}\right\rangle =U_{\mathcal{B}_{1}\leftrightarrow\mathcal{A}_{2}}\left(\left|\chi_{i}\right\rangle \otimes\left|\varphi\right\rangle \right)$.
Accordingly, the success probability is defined as 
\begin{alignat}{1}
p_{_{{\rm L}}}\left(S_{\chi\otimes\varphi}\right) & =\sup_{\bm{\Pi}\in{\rm LOCC}}\sum_{i=1}^{n}p_{i}\left\langle \xi_{i}\right|\Pi_{i}\left|\xi_{i}\right\rangle ,\label{p-local-1}
\end{alignat}
where $\bm{\Pi}=\left\{ \Pi_{1},\dots,\Pi_{n}\right\} $ is a LOCC
measurement realized in the bipartition $\mathcal{A}:\mathcal{B}$. 

The success probability, defined in \eqref{p-local-1}, has been exactly
computed for sets of Bell states and a resource state of the form
$\left|\eta\right\rangle =a_{1}\left|00\right\rangle +a_{2}\left|11\right\rangle $,
where $a_{1}\geqslant a_{2}\geqslant0$ and $a_{1}^{2}+a_{2}^{2}=1$.
For the Bell basis, denoted by $S_{B}$, assuming uniform distribution,
it is given by \citep{B-IQC-2015}
\begin{alignat}{1}
p_{_{{\rm L}}}\left(S_{B\otimes\eta}\right) & =\frac{1}{2}\left(1+2a_{1}a_{2}\right).\label{p-Bell}
\end{alignat}
The success probability is unity if and only if $a_{1}=a_{2}=1/\sqrt{2}$,
so the Bell basis can be perfectly distinguished if and only if the
resource state is maximally entangled. For a set of any three uniformly
distributed Bell states, denoted by $S_{B^{\prime}}$, the success
probability turns out to be \citep{B-IQC-2015}
\begin{alignat}{1}
p_{_{{\rm L}}}\left(S_{B^{\prime}\otimes\eta}\right) & =\frac{2}{3}\left(1+a_{1}a_{2}\right).\label{p-3-Bell}
\end{alignat}
Observe that one still requires a maximally entangled state as a resource
to perfectly distinguish three Bell states by LOCC. In some sense,
this is counter-intuitive because one would have expected the entanglement
cost in this case to be less than that for the Bell basis. In addition
to these two examples, recently, the local optimum with shared entanglement
has been computed for a family of noisy Bell states, which shows,
yet again, the optimum value (i.e., the global optimum) is achieved
with a maximally entangled resource \citep{BR-2021}. 

The formulas in \eqref{p-Bell} and \eqref{p-3-Bell} are not just
ordinary functions of the Schmidt coefficients $a_{1}$ and $a_{2}$.
In particular, the one in \eqref{p-Bell} is the fully entangled fraction
of the resource state $\left|\eta\right\rangle $, defined as \citep{H-H-1999}
\begin{alignat}{1}
F\left(\eta\right) & =\max_{\left|\Psi\right\rangle }\left\langle \Psi\right|\eta\left|\Psi\right\rangle ,\label{FEF(eta)}
\end{alignat}
where $\eta=\left|\eta\right\rangle \left\langle \eta\right|$, and
the maximum is taken over all maximally entangled states $\left|\Psi\right\rangle $.
So we can write \eqref{p-Bell} simply as
\begin{alignat}{1}
p_{_{{\rm L}}}\left(S_{B\otimes\eta}\right) & =F\left(\eta\right).\label{p-FEF}
\end{alignat}
Therefore, how well the Bell basis can be distinguished using LOCC
and shared entanglement as a resource is given by how close the resource
state is to a maximally entangled state. Likewise, one may also express
\eqref{p-3-Bell} as a function of $F\left(\eta\right)$. 

Note that one may also express \eqref{p-Bell} and \eqref{p-3-Bell}
as a simple function of the entanglement of $\left|\eta\right\rangle $
quantified by an entanglement measure such as negativity $N\left(\eta\right)=a_{1}a_{2}$
\citep{VW-2002}.

\subsection*{Problem statement}

In the present work, we consider the problem of locally distinguishing
an orthonormal maximally entangled basis of $\mathcal{A}_{1}\otimes\mathcal{B}_{1}$
using a resource state in $\mathcal{A}_{2}\otimes\mathcal{B}_{2}$.
As noted earlier, the resource state must be maximally entangled to
perfectly distinguish the elements of such a basis. So, here we are
interested in knowing how well such a basis can be locally distinguished
using a partially entangled resource. This can be answered by computing
the optimum local success probability. In addition, we will also discuss
the case when the maximally entangled states do not form a complete
basis. 

One may be tempted to view the primary problem as a simple extension
of the Bell basis problem previously discussed. However, we want to
emphasize the notable differences. First is that the Bell basis is
a particular basis of $\mathbb{C}^{2}\otimes\mathbb{C}^{2}$, whereas,
in this paper, we do not assume any particular basis, such as the
canonical generalized Bell basis used in quantum teleportation, and
the analysis holds for any orthonormal maximally entangled basis of
$\mathbb{C}^{d}\otimes\mathbb{C}^{d}$. The second is that for the
Bell basis, though the optimum success probability is given by the
fully entangled fraction of the resource state, there is no particular
reason to assume that this will carry over to higher dimensions. In
fact, results from local distinguishability tells us that many results
that hold in lower dimensions fail to hold in higher dimensions; for
example, any two maximally entangled states in $\mathbb{C}^{2}\otimes\mathbb{C}^{2}$
or any three maximally entangled states in $\mathbb{C}^{3}\otimes\mathbb{C}^{3}$
are perfectly distinguishable by LOCC but in dimensions $\mathbb{C}^{d}\otimes\mathbb{C}^{d}$
for $d\geqslant4$, a similar result does not hold \citep{Yu-Duan-2012,Cosentino-2013,Cosentino-Russo-2014}.
While we expect the optimal success probability (assuming an exact
formula can be obtained) to be a function of the Schmidt coefficients
of the resource state, the same form cannot be predicted beforehand.
The third is that for the Bell basis, the calculations were greatly
simplified because the states were known and enjoy a lot of symmetry,
whereas, in our case, since we do not assume any particular basis,
such simplifications are not possible. 

Let $\left\{ \left|\Psi_{1}\right\rangle ,\dots,\left|\Psi_{d^{2}}\right\rangle \right\} $
be an orthonormal, maximally entangled basis of $\mathcal{A}_{1}\otimes\mathcal{B}_{1}$.
Without loss of generality, we assume that
\begin{alignat}{1}
\left|\Psi_{1}\right\rangle  & =\frac{1}{\sqrt{d}}\sum_{i=1}^{d}\left|i\right\rangle \left|i\right\rangle \label{Psi(1)}
\end{alignat}
is the standard maximally entangled state. The remaining states can
be written as
\begin{alignat}{1}
\left|\Psi_{j}\right\rangle  & =\left(\mathbf{1}_{\mathcal{A}_{1}}\otimes U_{j}\right)\left|\Psi_{1}\right\rangle \label{Psi(k)}
\end{alignat}
for some unitary operator $U_{j}\in U\left(\mathcal{B}_{1}\right)$
for each $j=2,\dots,d^{2}$, where $U\left(\mathcal{B}_{1}\right)$
denotes the set of unitary operators acting on $\mathcal{B}_{1}$.
Since $\left\langle \Psi_{i}\vert\Psi_{j}\right\rangle =\delta_{ij}$
for all $i,j=1,\dots,d^{2}$, the unitary operators obey the relation
$\text{Tr}\left(U_{i}^{\dagger}U_{j}\right)=d\delta_{ij}$ for $i,j=1,\dots,d^{2}$,
where $U_{1}=\bm{1}_{\mathcal{B}_{1}}$. 

Define the set
\begin{alignat}{1}
S_{\Psi} & =\left\{ \left(\frac{1}{d^{2}},\left|\Psi_{k}\right\rangle \right):k=1,\dots,d^{2}\right\} .\label{S(Psi)}
\end{alignat}
This set is locally indistinguishable \citep{HSSH-2003,Nathanson-2005}
with the local optimum given by $p_{_{{\rm L}}}\left(S_{\Psi}\right)=1/d$
\citep{Nathanson-2005,Cosentino-2013}. 

In this paper, we consider the problem of distinguishing between the
elements of $S_{\Psi}$ using LOCC and a resource state 
\begin{alignat}{1}
\left|\tau\right\rangle  & =\sum_{i=1}^{d}a_{i}\left|i\right\rangle \left|i\right\rangle \in\mathcal{A}_{2}\otimes\mathcal{B}_{2},\label{tau}
\end{alignat}
where $\left\{ a_{i}\right\} $ are the ordered Schmidt coefficients
($a_{1}\geqslant a_{2}\geqslant\cdots\geqslant a_{d}\geqslant0$)
satisfying $\sum_{i=1}^{d}a_{i}^{2}=1$. Note that the resource state
is entangled as long as two or more Schmidt coefficients are positive. 

The set of interest is, therefore, 
\begin{alignat}{1}
S_{\Psi\otimes\tau} & =\left\{ \left(\frac{1}{d^{2}},\left|\Psi_{k}\right\rangle \otimes\left|\tau\right\rangle \right):k=1,\dots,d^{2}\right\} .\label{psicrosstau}
\end{alignat}
We wish to find out how well the elements of $S_{\Psi\otimes\tau}$
can be distinguished by LOCC. Since this problem is defined in the
bipartition $\mathcal{A}:\mathcal{B}$ with $\mathcal{A}=\mathcal{A}_{1}\otimes\mathcal{A}_{2}$
and $\mathcal{B}=\mathcal{B}_{1}\otimes\mathcal{B}_{2}$, we first
let $\left|\Psi_{k}\right\rangle \otimes\left|\tau\right\rangle \rightarrow\left|\Phi_{k}\right\rangle \in\mathcal{A}\otimes\mathcal{B}$,
where $\left|\Phi_{k}\right\rangle =U_{\mathcal{B}_{1}\leftrightarrow\mathcal{A}_{2}}\left(\left|\Psi_{k}\right\rangle \otimes\left|\tau\right\rangle \right)$
for each $k=1,\dots,d^{2}$, and define the success probability as
\begin{alignat}{1}
p_{_{{\rm L}}}\left(S_{\Psi\otimes\tau}\right) & =\sup_{\bm{\Pi}\in{\rm LOCC}}\frac{1}{d^{2}}\sum_{i=1}^{d^{2}}\left\langle \Phi_{i}\right|\Pi_{i}\left|\Phi_{i}\right\rangle ,\label{p-local-1-1}
\end{alignat}
where $\bm{\Pi}=\left\{ \Pi_{1},\dots,\Pi_{d^{2}}\right\} $ is a
LOCC measurement realized in the bipartition $\mathcal{A}:\mathcal{B}$. 

Our objective is to compute the local optimum defined in \eqref{p-local-1-1}
and find an LOCC protocol achieving the same value. As far as we are
aware, this problem remains open in all state spaces $\mathbb{C}^{d}\otimes\mathbb{C}^{d}$
for $d\geqslant3$. We are particularly interested in knowing whether
an expression similar to \eqref{p-FEF} holds for $p_{_{{\rm L}}}\left(S_{\Psi\otimes\tau}\right)$
as well.

\subsection*{Overview of the main results }

The main result of the paper is the following: 
\begin{thm}
\label{main-theorem} The success probability for distinguishing the
elements of $S_{\Psi\otimes\tau}$ by LOCC is given by 
\begin{alignat}{1}
p_{_{{\rm L}}}\left(S_{\Psi\otimes\tau}\right) & =F\left(\tau\right),\label{eqTheorem-1}
\end{alignat}
where $F\left(\tau\right)$ is the fully entangled fraction of $\left|\tau\right\rangle $,
defined as the maximum overlap between $\left|\tau\right\rangle $
and a maximally entangled state, and given by the formula \citep{H-H-1999}
\begin{alignat}{1}
F\left(\tau\right) & =\frac{1}{d}\left(\sum_{i=1}^{d}a_{i}\right)^{2}\in\left[\frac{1}{d},1\right].\label{FEF(tau)}
\end{alignat}
The optimum value is achieved by a teleportation-based LOCC protocol. 
\end{thm}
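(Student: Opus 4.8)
The plan is to sandwich the optimum between a lower bound coming from an explicit LOCC protocol and an upper bound coming from the PPT relaxation, and to arrange that both bounds equal $F(\tau)$. Since $\mathrm{LOCC}\subseteq\mathrm{SEP}\subseteq\mathrm{PPT}$, a matching pair $F(\tau)\le p_{_{\mathrm L}}(S_{\Psi\otimes\tau})$ and $p_{\mathrm{PPT}}(S_{\Psi\otimes\tau})\le F(\tau)$ forces equality throughout and, as a bonus, shows that separable and PPT measurements confer no advantage. Throughout I write $\rho_k=|\Phi_k\rangle\langle\Phi_k|$ and exploit that, after the swap, $|\Phi_1\rangle=|\Psi_1\rangle_{\mathcal A_1\mathcal B_1}\otimes|\tau\rangle_{\mathcal A_2\mathcal B_2}$ and $|\Phi_k\rangle=(\mathbf{1}\otimes U_k)_{\mathcal B_1}|\Phi_1\rangle$, so that the $d^2$ states differ only by a unitary local to Bob.

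For the lower bound I would analyze the teleportation-based protocol directly. Alice measures her pair $\mathcal A_1\mathcal A_2$ in the generalized Bell (Weyl-Heisenberg) basis $\{|\beta_{xz}\rangle\}$ and broadcasts the outcome $(x,z)$; Bob applies the corresponding Weyl correction on $\mathcal B_2$ and then measures $\mathcal B_1\mathcal B_2$ in the image of $\{|\Psi_m\rangle\}$ under the natural identification $\mathcal A_1\cong\mathcal B_2$, announcing $m$ as his guess. A short computation shows that the joint probability that Alice obtains $(x,z)$ and Bob then correctly identifies $k$ equals $\tfrac{1}{d^{3}}\big(\sum_i a_i\big)^2$ for every $k$ and every $(x,z)$; summing the $d^2$ equal contributions yields $\tfrac1d\big(\sum_i a_i\big)^2=F(\tau)$. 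Since this is a one-way LOCC protocol, $p_{_{\mathrm L}}(S_{\Psi\otimes\tau})\ge F(\tau)$.

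For the upper bound I would use the PPT discrimination SDP together with the following dual handle: for any Hermitian $M$ with $M\succeq\rho_k^{T_{\mathcal B}}$ for every $k$,
\[
\sum_k \operatorname{Tr}(\rho_k\Pi_k)=\sum_k\operatorname{Tr}\!\big(\rho_k^{T_{\mathcal B}}\,\Pi_k^{T_{\mathcal B}}\big)\le\sum_k\operatorname{Tr}\!\big(M\,\Pi_k^{T_{\mathcal B}}\big)=\operatorname{Tr}(M),
\]
using $\Pi_k^{T_{\mathcal B}}\succeq0$ and $\sum_k\Pi_k^{T_{\mathcal B}}=\mathbf{1}$, whence $p_{\mathrm{PPT}}\le\tfrac{1}{d^2}\operatorname{Tr}(M)$. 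The structural input is that $\rho_k^{T_{\mathcal B}}=\tfrac1d\,S_k\otimes\sigma$, where $\sigma=(|\tau\rangle\langle\tau|)^{T_{\mathcal B_2}}$ and $S_k$ is a unitary conjugate of $\mathrm{SWAP}_{\mathcal A_1\mathcal B_1}$ by $\overline{U_k}$ on $\mathcal B_1$, so $-\mathbf{1}\preceq S_k\preceq\mathbf{1}$. I would then test the candidate certificate $M=\mathbf{1}_{\mathcal A_1\mathcal B_1}\otimes\tfrac1d|\sigma|$. Feasibility follows from the decomposition $\mathbf{1}\otimes|\sigma|-S_k\otimes\sigma=(\mathbf{1}-S_k)\otimes\sigma_+ +(\mathbf{1}+S_k)\otimes\sigma_-\succeq0$, writing $\sigma=\sigma_+-\sigma_-$ for its positive and negative parts. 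Since the eigenvalues of $\sigma$ are $a_j^2$ and $\pm a_j a_{j'}$, one gets $\operatorname{Tr}|\sigma|=\big(\sum_i a_i\big)^2$, and therefore $\tfrac{1}{d^2}\operatorname{Tr}(M)=\tfrac{1}{d^2}\cdot d^2\cdot\tfrac1d\big(\sum_i a_i\big)^2=F(\tau)$, giving $p_{\mathrm{PPT}}\le F(\tau)$.

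The main obstacle is the upper bound, specifically discovering the certificate $M$: the naive choice $M=\|\rho_k^{T_{\mathcal B}}\|\,\mathbf{1}$ overshoots by a factor of order $d^2$, so one must exploit the tensor splitting $\rho_k^{T_{\mathcal B}}=\tfrac1d S_k\otimes\sigma$ and the fact that conjugating $\mathrm{SWAP}$ by the local unitaries $\overline{U_k}$ keeps it pinched between $-\mathbf{1}$ and $\mathbf{1}$, which is exactly what lets a single $M$ dominate all $d^2$ transposes $\rho_k^{T_{\mathcal B}}$ at once. The identity $\operatorname{Tr}|\sigma|=\big(\sum_i a_i\big)^2$ is what ties the certificate's value to the fully entangled fraction, and verifying it alongside the feasibility decomposition is the crux. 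Once both bounds are established, Theorem~\ref{main-theorem} and the collapse of the SEP/PPT/LOCC hierarchy follow immediately.
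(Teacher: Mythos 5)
Your proposal is correct and takes essentially the same route as the paper: your dual certificate $M=\mathbf{1}_{\mathcal{A}_1\mathcal{B}_1}\otimes\tfrac{1}{d}|\sigma|$ is exactly the partial transpose of the paper's $d^{2}\,\mathbb{H}$, and your feasibility decomposition $(\mathbf{1}-S_k)\otimes\sigma_{+}+(\mathbf{1}+S_k)\otimes\sigma_{-}\succeq0$ coincides with the paper's splitting into $\Upsilon_k\otimes\Gamma$ and $\bigl(\mathbf{1}-\tfrac{1}{2}\Upsilon_k\bigr)\otimes\psi_{ij}^{-}$ terms, since $\Upsilon_k=\mathbf{1}-S_k$, $\Gamma=\sigma_{+}$, and $\sum_{i<j}a_ia_j\psi_{ij}^{-}=\sigma_{-}$. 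The lower bound is the same teleportation idea with channel and input interchanged (you teleport half of $|\Psi_k\rangle$ through $|\tau\rangle$ rather than half of $|\tau\rangle$ through $|\Psi_k\rangle$), and both versions yield $F(\tau)$.
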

\begin{rem}
Since $\frac{1}{d}\leqslant F\left(\tau\right)\leqslant1$, it holds
that $\frac{1}{d}\leqslant p_{_{{\rm L}}}\left(S_{\Psi\otimes\tau}\right)\leqslant1$,
where each of the left and right inequalities become equality for
product and maximally entangled $\left|\tau\right\rangle $, respectively. 
\begin{rem}
Theorem \ref{main-theorem} holds for any maximally entangled basis
of $\mathbb{C}^{d}\otimes\mathbb{C}^{d}$. Our proof, as we will see,
is representation independent.
\begin{rem}
One may also express the success probability as a function of the
negativity of the resource state $N\left(\tau\right)$ using the relation
$F\left(\tau\right)=\frac{1}{d}\left[1+2N\left(\tau\right)\right]$. 
\end{rem}
\end{rem}
\end{rem}
To prove Theorem \ref{main-theorem} we will proceed as follows. First
we will consider a relaxation of our problem by replacing LOCC with
positive-partial-transpose (PPT) measurements (these are measurements
whose operators are positive under partial transposition). The (PPT)
success probability is defined as 
\begin{alignat}{1}
p_{_{{\rm PPT}}}\left(S_{\Psi\otimes\tau}\right) & =\sup_{\bm{\Omega}\in{\rm PPT}}\frac{1}{d^{2}}\sum_{i=1}^{d^{2}}\left\langle \Phi_{i}\right|\Omega_{i}\left|\Phi_{i}\right\rangle ,\label{p-local-1-1-1}
\end{alignat}
where $\bm{\Omega}=\left\{ \Omega_{1},\dots,\Omega_{d^{2}}\right\} $
is a PPT measurement on $\mathcal{A}\otimes\mathcal{B}$. 

Following \citep{Cosentino-2013}, we will formulate our PPT distinguishability
problem as a semidefinite program (SDP) and solve the dual problem
to obtain an upper bound on $p_{_{{\rm PPT}}}\left(S_{\Psi\otimes\tau}\right)$. 
\begin{lem}
\label{PPT-upper bound} An upper bound on the success probability
for distinguishing the elements of $S_{\Psi\otimes\tau}$ by any PPT
measurement is given by 
\begin{alignat}{1}
p_{_{{\rm PPT}}}\left(S_{\Psi\otimes\tau}\right) & \leqslant F\left(\tau\right).\label{PPT-upper bound-1}
\end{alignat}
\end{lem}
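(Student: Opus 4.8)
The plan is to prove the bound by semidefinite duality: exhibit a single dual-feasible point whose objective value equals $F(\tau)$, and invoke weak duality. Writing $\Phi_i=|\Phi_i\rangle\langle\Phi_i|$ and $C_i=\frac{1}{d^2}\Phi_i$, the primal program $p_{_{\mathrm{PPT}}}(S_{\Psi\otimes\tau})=\max\sum_i\langle C_i,\Omega_i\rangle$ over $\Omega_i\succeq0$, $\Omega_i^{T_{\mathcal B}}\succeq0$, $\sum_i\Omega_i=\mathbf 1$ has a Lagrange dual, obtained by introducing a Hermitian multiplier $Y$ for the completeness constraint and PSD multipliers $Q_i\succeq0$ for the PPT constraints: minimize $\mathrm{Tr}(Y)$ subject to $Y\succeq\frac{1}{d^2}\Phi_i+Q_i^{T_{\mathcal B}}$ and $Q_i\succeq0$ for every $i$. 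By weak duality, any feasible $(Y,\{Q_i\})$ gives $p_{_{\mathrm{PPT}}}\le\mathrm{Tr}(Y)$, so it suffices to construct one with $\mathrm{Tr}(Y)=F(\tau)$. The role of the $Q_i$ is essential here: with $Q_i=0$ the cheapest valid $Y$ is $\frac{1}{d^2}\sum_i\Phi_i$, which has trace $1$, and it is precisely the PPT terms $Q_i^{T_{\mathcal B}}$ that pull the optimum down to $F(\tau)$.

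Before constructing the dual point I record the structural facts the construction rests on. First, the $|\Phi_i\rangle$ are orthonormal, since $\langle\Phi_i|\Phi_j\rangle=\langle\Psi_i|\Psi_j\rangle\langle\tau|\tau\rangle=\delta_{ij}$. Second, after the swap they factor as $|\Phi_1\rangle=|\Phi^{+}\rangle_{\mathcal A_1\mathcal B_1}\otimes|\tau\rangle_{\mathcal A_2\mathcal B_2}$ with $|\Phi^{+}\rangle=\frac{1}{\sqrt d}\sum_i|ii\rangle$, and $|\Phi_i\rangle=(\mathbf 1_{\mathcal A}\otimes V_i)|\Phi_1\rangle$ with $V_i=U_i\otimes\mathbf 1_{\mathcal B_2}$ acting on $\mathcal B=\mathcal B_1\otimes\mathcal B_2$. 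Third, since $\mathrm{Tr}(U_i^{\dagger}U_j)=d\,\delta_{ij}$ makes $\{U_i/\sqrt d\}$ an orthonormal operator basis, $\sum_i U_i X U_i^{\dagger}=d\,\mathrm{Tr}(X)\mathbf 1$, which yields the key identity $\sum_i\Phi_i=\mathbf 1_{\mathcal A_1\mathcal B_1}\otimes\tau$ with $\tau=|\tau\rangle\langle\tau|$. Finally, the target value is itself a partial-transpose norm: $F(\tau)=\frac1d\big(\sum_i a_i\big)^2=\frac1d\big\|\tau^{T_{\mathcal B_2}}\big\|_1$, and writing $\tau^{T_{\mathcal B_2}}=\tau_{+}-\tau_{-}$ for the positive and negative parts one has $\mathrm{Tr}\,\tau_{\pm}=\tfrac12\big(dF(\tau)\pm1\big)$. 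Since $\Phi_1^{T_{\mathcal B}}=\tfrac1d\,\mathrm{SWAP}_{\mathcal A_1\mathcal B_1}\otimes\tau^{T_{\mathcal B_2}}$, this is precisely the object that should drive the construction.

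The construction proceeds by seeking a $Y$ invariant under the symmetry together with a covariant family $Q_i=(\mathbf 1_{\mathcal A}\otimes V_i)Q_1(\mathbf 1_{\mathcal A}\otimes V_i)^{\dagger}$, so that the $d^2$ operator inequalities collapse to the single representative $i=1$. One then builds $Y$ and $Q_1$ from the spectral data of $\Phi_1^{T_{\mathcal B}}=\tfrac1d\,\mathrm{SWAP}\otimes\tau^{T_{\mathcal B_2}}$: the positive part furnishes a PSD choice of $Q_1$, while the trace of $Y$ is tuned against $\|\tau^{T_{\mathcal B_2}}\|_1$ so that $\mathrm{Tr}(Y)=\frac1d\|\tau^{T_{\mathcal B_2}}\|_1=F(\tau)$. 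The remaining work is to verify $Y-\frac{1}{d^2}\Phi_1-Q_1^{T_{\mathcal B}}\succeq0$, which I would check on the joint eigenspaces determined by $\mathrm{SWAP}_{\mathcal A_1\mathcal B_1}$ (eigenvalues $\pm\tfrac1d$ on the symmetric and antisymmetric subspaces) together with the spectrum of $\tau^{T_{\mathcal B_2}}$, using $\mathbf 1_{\mathcal A_1\mathcal B_1}\otimes\tau^{T_{\mathcal B_2}}=\sum_i\Phi_i^{T_{\mathcal B}}$ to keep the bookkeeping representation independent.

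The hard part will be twofold. First, partial transposition does not intertwine with the covariance in the naive way: under $T_{\mathcal B}$, conjugation by $V_i=U_i\otimes\mathbf 1$ turns into conjugation by the complex conjugate $\bar V_i=\bar U_i\otimes\mathbf 1$, so $\frac{1}{d^2}\Phi_i$ and $Q_i^{T_{\mathcal B}}$ transform under different representations; making the reduction to $i=1$ legitimate forces $Y$ to be invariant under the larger symmetry generated by both families, and the covariant ansatz for $Q_i$ must be chosen compatibly. Second, even at $i=1$ the positivity check couples the maximally entangled structure on $\mathcal A_1\mathcal B_1$, which contributes clean $\pm\tfrac1d$ SWAP-eigenvalues, to the pair $\tau$ and $\tau^{T_{\mathcal B_2}}$ on $\mathcal A_2\mathcal B_2$, which do not commute; controlling this interplay while simultaneously forcing $\mathrm{Tr}(Y)$ down to exactly $F(\tau)=\frac1d\|\tau^{T_{\mathcal B_2}}\|_1$ is the crux of the argument and what makes the bound tight.
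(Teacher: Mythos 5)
Your framework is the right one --- weak duality for the PPT-discrimination SDP, with the goal of exhibiting a dual-feasible point of value $F(\tau)$ --- and your structural observations are all correct and relevant: the covariance $|\Phi_i\rangle=(\mathbf 1\otimes V_i)|\Phi_1\rangle$, the identity $\sum_i\Phi_i=\mathbf 1_{\mathcal A_1\mathcal B_1}\otimes\tau$, and especially the identification $F(\tau)=\tfrac1d\|\tau^{T_{\mathcal B_2}}\|_1$, which is indeed where the value comes from. But the proposal stops exactly where the proof has to begin: no explicit $Y$ (or $Q_i$) is written down, and the feasibility check is deferred as ``the remaining work'' and ``the crux.'' Worse, the route you sketch --- a covariant family $Q_i=(\mathbf 1\otimes V_i)Q_1(\mathbf 1\otimes V_i)^{\dagger}$ collapsing the $d^2$ constraints to $i=1$ --- runs into the obstruction you yourself flag: partial transposition turns conjugation by $V_i$ into conjugation by $\bar V_i$, so the terms $\frac{1}{d^2}\Phi_i$ and $Q_i^{T_{\mathcal B}}$ transform under different representations and the reduction to a single representative is not legitimate without further argument. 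As it stands this is a plan with an acknowledged hole at its center, not a proof.

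The paper's construction shows how to avoid the symmetry argument altogether. Working before the swap, it takes
\begin{alignat}{1}
\mathbb{H} & =\frac{1}{d^{3}}\,\mathbf{1}_{\mathcal{A}_{1}\otimes\mathcal{B}_{1}}\otimes\Bigl[\tau+2\sum_{i<j}a_{i}a_{j}\,{\rm T}_{\mathcal{A}_{2}}\bigl(\psi_{ij}^{-}\bigr)\Bigr],\nonumber
\end{alignat}
i.e.\ the identity on $\mathcal{A}_{1}\otimes\mathcal{B}_{1}$ tensored with $\tau$ plus twice (the transpose of) the negative part of $\tau^{T_{\mathcal A_2}}$ --- precisely the object your $\tau_{-}$ bookkeeping points at, with ${\rm Tr}(\mathbb{H})=\tfrac1d\bigl(1+2\sum_{i<j}a_ia_j\bigr)=F(\tau)$. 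Because the $\mathcal{A}_{1}\mathcal{B}_{1}$ factor is the identity, no covariance reduction is needed: feasibility is checked for each $k$ separately by expanding $\Psi_k$ in its own Schmidt basis, using that $\mathbf 1-d\,{\rm T}_{\mathcal A_1}(\Psi_k)$ equals twice a projector (hence is PSD) and that $\mathbf 1-\tfrac12\bigl[\mathbf 1-d\,{\rm T}_{\mathcal A_1}(\Psi_k)\bigr]$ is also a projector, so the whole expression becomes a sum of tensor products of PSD operators. This check is uniform in $k$ and representation independent, which is what makes the result hold for an arbitrary maximally entangled basis. To complete your argument you would need either to supply this explicit $Y$ and the positivity verification, or to repair the covariance reduction (e.g.\ by averaging over both the $V_i$ and $\bar V_i$ actions), neither of which is present.
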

Since LOCC is a strict subset of PPT measurements, we have the following
corollary.
\begin{cor}
\label{p_L<=00003Dp(PPT)-1} For the given set of states $S_{\Psi\otimes\tau}$,
it holds that
\begin{alignat}{1}
p_{_{{\rm L}}}\left(S_{\Psi\otimes\tau}\right)\leqslant p_{{\rm _{{\rm PPT}}}}\left(S_{\Psi\otimes\tau}\right) & \leqslant F\left(\tau\right).\label{p(L)<=00003Dp(PPT)}
\end{alignat}
\end{cor}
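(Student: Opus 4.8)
Since every LOCC measurement is in particular a PPT measurement, the supremum defining $p_{_{\mathrm L}}(S_{\Psi\otimes\tau})$ ranges over a subset of those defining $p_{_{\mathrm{PPT}}}(S_{\Psi\otimes\tau})$, so $p_{_{\mathrm L}}\leqslant p_{_{\mathrm{PPT}}}$ is immediate and the substance of Corollary~\ref{p_L<=00003Dp(PPT)-1} is the right-hand inequality, i.e.\ Lemma~\ref{PPT-upper bound}; I will sketch how I would prove that. The plan is to write $p_{_{\mathrm{PPT}}}$ as a semidefinite program and then exhibit one explicit feasible point of its dual whose objective value equals $F(\tau)$, so that weak duality does all the work. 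The single structural fact I would lean on is that the $d^2$ states are generated by unitaries acting on one subsystem only: with $W_k=\mathbf{1}_{\mathcal A}\otimes(U_k)_{\mathcal B_1}\otimes\mathbf{1}_{\mathcal B_2}$ one has $|\Phi_k\rangle=W_k|\Phi_1\rangle$, while after the swap $|\Phi_1\rangle\langle\Phi_1|=(|\Psi_1\rangle\langle\Psi_1|)_{\mathcal A_1\mathcal B_1}\otimes(|\tau\rangle\langle\tau|)_{\mathcal A_2\mathcal B_2}$.

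Taking the Lagrange dual of the primal program, as in \citep{Cosentino-2013}, produces the problem of minimizing $\mathrm{Tr}(Y)$ over Hermitian $Y$ and operators $Q_k\succeq0$ subject to $Y\succeq\tfrac{1}{d^2}|\Phi_k\rangle\langle\Phi_k|+Q_k^{\Gamma}$ for every $k$, where $\Gamma$ is partial transposition across $\mathcal A:\mathcal B$; weak duality then gives $p_{_{\mathrm{PPT}}}\leqslant\mathrm{Tr}(Y)$ for any dual-feasible $(Y,\{Q_k\})$. The simplification I would exploit is covariance. Because $W_k$ acts only on $\mathcal B_1$ it commutes with $\Gamma$, so the choice $Q_k=W_kQ_1W_k^{\dagger}$ turns the $k$-th constraint operator into $W_k\bigl(\tfrac{1}{d^2}|\Phi_1\rangle\langle\Phi_1|+Q_1^{\Gamma}\bigr)W_k^{\dagger}$. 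If moreover $Y$ is taken invariant under conjugation by every $W_k$ — which, since $\{U_k\}$ spans all operators on $\mathcal B_1$, forces $Y=\mathbf{1}_{\mathcal B_1}\otimes\tilde Y$ for some $\tilde Y$ on $\mathcal A_1\otimes\mathcal A_2\otimes\mathcal B_2$ — then conjugating the $k$-th constraint by $W_k^{\dagger}$ collapses all $d^2$ constraints to the single inequality at $k=1$.

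It then remains to choose $\tilde Y$. Computing the partial transpose with $(|\Psi_1\rangle\langle\Psi_1|)^{\Gamma_{\mathcal A_1}}=\tfrac1d\,\mathbb F$ for the swap $\mathbb F$ on $\mathcal A_1\otimes\mathcal B_1$, and writing $T:=(|\tau\rangle\langle\tau|)^{\Gamma_{\mathcal A_2}}$, gives $\tfrac{1}{d^2}(|\Phi_1\rangle\langle\Phi_1|)^{\Gamma}=\tfrac{1}{d^3}\,\mathbb F\otimes T$. I would take $\tilde Y$ with $\tilde Y^{\Gamma}=\tfrac{1}{d^3}\,\mathbf{1}_{\mathcal A_1}\otimes|T|$, where $|T|=T_++T_-$ is assembled from the positive and negative parts of $T$, and then define $Q_1$ so that the $k=1$ constraint holds with equality. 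The only thing left to verify is $Q_1\succeq0$, which is exactly the operator inequality $\mathbf{1}\otimes|T|\succeq\mathbb F\otimes T$; this holds because $\mathbf{1}\otimes|T|-\mathbb F\otimes T=(\mathbf{1}-\mathbb F)\otimes T_++(\mathbf{1}+\mathbb F)\otimes T_-\succeq0$, each summand being a tensor product of positive semidefinite factors (note $\mathbf{1}\pm\mathbb F\succeq0$). Finally $\mathrm{Tr}(Y)=d\,\mathrm{Tr}(\tilde Y)=\tfrac1d\,\mathrm{Tr}|T|$, and since the eigenvalues of $T$ are $a_p^2$ and $\pm a_pa_q$ one gets $\mathrm{Tr}|T|=\sum_p a_p^2+2\sum_{p<q}a_pa_q=\bigl(\sum_i a_i\bigr)^2$, so $\mathrm{Tr}(Y)=\tfrac1d\bigl(\sum_i a_i\bigr)^2=F(\tau)$.

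The hard part is not any one calculation but guessing the correct dual ansatz. The two ideas that make it go through are, first, that covariance under the single-subsystem unitaries $W_k$ reduces the whole program to one operator inequality, and second, that replacing the swap-twisted $T$ by its modulus $|T|$ is simultaneously legitimate (via $\mathbf{1}\pm\mathbb F\succeq0$) and tight — the arithmetic identity $\mathrm{Tr}|T|=(\sum_i a_i)^2$ is precisely what lands the bound on the fully entangled fraction rather than on something strictly larger. Once Lemma~\ref{PPT-upper bound} is established, Corollary~\ref{p_L<=00003Dp(PPT)-1} follows at once from $\mathrm{LOCC}\subseteq\mathrm{PPT}$.
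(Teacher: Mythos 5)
Your proposal is correct and takes essentially the same approach as the paper: the dual witness you construct, $Y$ with $Y^{\Gamma}=\frac{1}{d^{3}}\mathbf{1}\otimes|T|$, is precisely the paper's operator $\mathbb{H}$ of Eq.~\eqref{H} (since $|T|^{{\rm T}_{\mathcal{A}_{2}}}=\tau+2\sum_{i<j}a_{i}a_{j}{\rm T}_{\mathcal{A}_{2}}(\psi_{ij}^{-})$), and your positivity check $(\mathbf{1}-\mathbb{F})\otimes T_{+}+(\mathbf{1}+\mathbb{F})\otimes T_{-}\succeq0$ is the same identity as Eq.~\eqref{Final term}, where $\Upsilon_{k}=\mathbf{1}-\mathbb{F}_{k}$ and $\mathbf{1}-\frac{1}{2}\Upsilon_{k}=\frac{1}{2}(\mathbf{1}+\mathbb{F}_{k})$ play the roles of your antisymmetric and symmetric factors. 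The only real difference is organizational: you collapse the $d^{2}$ dual constraints to the single $k=1$ constraint by covariance under the $W_{k}$ (together with the swap-versus-partial-transpose commutation that the paper isolates as Lemma~\ref{swap-T=00003DT-swap}), whereas the paper verifies each $k$ directly by writing $\left|\Psi_{k}\right\rangle $ in its own Schmidt basis; both devices do the same work, and the first inequality follows in both treatments from ${\rm LOCC}\subset{\rm PPT}$.
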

The next result shows that the upper bound in Corollary \ref{p_L<=00003Dp(PPT)-1}
is also a lower bound on $p_{_{{\rm L}}}\left(S_{\Psi\otimes\tau}\right)$. 
\begin{lem}
\label{LOCC-prob} The success probability for distinguishing the
elements of $S_{\Psi\otimes\tau}$ by LOCC is bounded below by

\begin{alignat}{1}
p_{_{{\rm L}}}\left(S_{\Psi\otimes\tau}\right) & \geqslant F\left(\tau\right).\label{p(L)>=00003D()}
\end{alignat}
\end{lem}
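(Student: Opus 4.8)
The plan is to prove the bound by constructing an explicit one-way, teleportation-based LOCC protocol and showing that its success probability equals $F\left(\tau\right)$; since $p_{_{{\rm L}}}\left(S_{\Psi\otimes\tau}\right)$ is a supremum over all LOCC protocols, the value attained by any single protocol is automatically a lower bound, which is the content of Lemma~\ref{LOCC-prob}. The protocol I would use is the following. Bob measures his two systems $\mathcal{B}_{1}\otimes\mathcal{B}_{2}$ in the canonical maximally entangled (Bell) basis $\left\{ \left|\beta_{mn}\right\rangle =\left(\mathbf{1}\otimes V_{mn}\right)\left|\beta_{00}\right\rangle \right\} $ generated by the $d^{2}$ generalized Pauli operators $V_{mn}$, $m,n=0,\dots,d-1$, and sends the outcome $\left(m,n\right)$ to Alice. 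Conditioned on $\left(m,n\right)$, Alice applies the correction $V_{mn}^{\dagger}$ on $\mathcal{A}_{2}$ and measures $\mathcal{A}_{1}\otimes\mathcal{A}_{2}$ in the fixed maximally entangled basis $\left\{ \left(\mathbf{1}\otimes U_{l}\right)\left|\Omega\right\rangle \right\} _{l=1}^{d^{2}}$, where $\left|\Omega\right\rangle $ is the standard maximally entangled state on $\mathcal{A}=\mathcal{A}_{1}\otimes\mathcal{A}_{2}$, and guesses the index $l$ she records. Absorbing the correction into the basis, Alice's effective measurement vectors are $\left|\Theta_{l}^{mn}\right\rangle =\left(\mathbf{1}\otimes V_{mn}U_{l}\right)\left|\Omega\right\rangle $, which are orthonormal in $l$ because $\mathrm{Tr}\left(U_{i}^{\dagger}U_{j}\right)=d\,\delta_{ij}$; the protocol is plainly LOCC, and its global guess operators $\Pi_{k}=\sum_{m,n}\left|\Theta_{k}^{mn}\right\rangle \left\langle \Theta_{k}^{mn}\right|\otimes\left|\beta_{mn}\right\rangle \left\langle \beta_{mn}\right|$ satisfy $\sum_{k}\Pi_{k}=\mathbf{1}$ by completeness of the two local bases.

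Next I would carry out the (distorted) teleportation computation. Projecting the $\mathcal{B}_{1}\otimes\mathcal{B}_{2}$ factor of $\left|\Phi_{k}\right\rangle $ onto $\left\langle \beta_{mn}\right|$ produces, up to the transpose conventions induced by the swap $U_{\mathcal{B}_{1}\leftrightarrow\mathcal{A}_{2}}$, the subnormalized conditional state $\left|\zeta_{k}^{mn}\right\rangle =\tfrac{1}{\sqrt{d}}\left(\mathbf{1}\otimes D_{a}V_{mn}U_{k}\right)\left|\Omega\right\rangle $ on $\mathcal{A}$, where $D_{a}=\mathrm{diag}\left(a_{1},\dots,a_{d}\right)$ is the diagonal matrix of Schmidt coefficients of $\left|\tau\right\rangle $; the only effect of the partial entanglement is to insert the distortion $D_{a}$ as a left factor. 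Using the standard identity $\left\langle \Omega\right|\left(\mathbf{1}\otimes X\right)\left|\Omega\right\rangle =\tfrac{1}{d}\mathrm{Tr}\left(X\right)$, the norm is $\left\langle \zeta_{k}^{mn}\vert\zeta_{k}^{mn}\right\rangle =\tfrac{1}{d^{2}}\mathrm{Tr}\left(D_{a}^{2}\right)=\tfrac{1}{d^{2}}$, so each Bell outcome occurs with probability $1/d^{2}$ independently of $k$, while the overlap with Alice's correct vector is $\left\langle \Theta_{k}^{mn}\vert\zeta_{k}^{mn}\right\rangle =\tfrac{1}{d\sqrt{d}}\mathrm{Tr}\left(V_{mn}^{\dagger}D_{a}V_{mn}\right)=\tfrac{1}{d\sqrt{d}}\sum_{i}a_{i}$, since the unitaries $U_{k}$ and $V_{mn}$ cancel cyclically under the trace. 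Hence the conditional success probability $\left|\left\langle \Theta_{k}^{mn}\vert\zeta_{k}^{mn}\right\rangle \right|^{2}/\left\langle \zeta_{k}^{mn}\vert\zeta_{k}^{mn}\right\rangle =\tfrac{1}{d}\left(\sum_{i}a_{i}\right)^{2}=F\left(\tau\right)$ is the same for every $k$ and every outcome $\left(m,n\right)$, and summing over the $d^{2}$ outcomes and averaging over the $d^{2}$ equiprobable states returns exactly $F\left(\tau\right)$, which proves $p_{_{{\rm L}}}\left(S_{\Psi\otimes\tau}\right)\geqslant F\left(\tau\right)$.

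The conceptual crux, and what I expect to be the main obstacle, is that a partially entangled resource cannot realize faithful teleportation, so Alice can never restore $\left|\Psi_{k}\right\rangle $ exactly; the teleported half always arrives multiplied by the distortion $D_{a}$. This suggests the tempting but strictly suboptimal strategy of first converting $\left|\tau\right\rangle $ into a maximally entangled state by stochastic LOCC and teleporting perfectly on success---for a product resource that route yields only $1/d^{2}$, whereas $F\left(\tau\right)=1/d$. The right move is instead to keep the distorted state and measure it directly in a maximally entangled basis, whereupon the trace identity collapses every relevant overlap to $\mathrm{Tr}\left(D_{a}\right)/\sqrt{d}=\sqrt{F\left(\tau\right)}$, uniformly in $k$ and $\left(m,n\right)$. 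I anticipate that the real work lies in the bookkeeping: tracking the swap $U_{\mathcal{B}_{1}\leftrightarrow\mathcal{A}_{2}}$ and the transpose conventions carefully enough that the distortion lands as the clean left factor $D_{a}$. Once that is in hand the identity $\left\langle \Omega\right|\left(\mathbf{1}\otimes X\right)\left|\Omega\right\rangle =\tfrac{1}{d}\mathrm{Tr}\left(X\right)$ does all the remaining work, and because the argument invokes only $\mathrm{Tr}\left(U_{i}^{\dagger}U_{j}\right)=d\,\delta_{ij}$ it is representation independent, consistent with the remarks following Theorem \ref{main-theorem}.
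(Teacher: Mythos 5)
Your protocol is correct and attains $F(\tau)$, but it runs the teleportation in the opposite direction from the paper, and the comparison is instructive. In the paper's protocol Alice measures $\mathcal{A}_1\otimes\mathcal{A}_2$ in the canonical maximally entangled basis: she teleports her half of the \emph{known} resource $|\tau\rangle$ through the \emph{unknown} maximally entangled channel $|\Psi_k\rangle$. Because that channel is maximally entangled, the teleportation is exact up to the unknown tag $U_k$, so after Bob's standard correction he deterministically holds $|\gamma_k\rangle=(\mathbf{1}_{\mathcal{B}_2}\otimes U_k)|\tau\rangle$ on $\mathcal{B}_2\otimes\mathcal{B}_1$ --- no conditional states and no distortion operator ever appear --- and a single measurement in the basis $\{(\mathbf{1}_{\mathcal{B}_2}\otimes U_i)|\Psi_1\rangle\}$ yields success probability $|\langle\Psi_1|\tau\rangle|^2=\tfrac{1}{d}\bigl(\sum_i a_i\bigr)^2=F(\tau)$ in one line. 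You instead have Bob teleport his half of the \emph{unknown} state through the \emph{known} but imperfect channel $|\tau\rangle$, which forces you to track the distortion $D_a$ and the outcome-dependent conditional states $|\zeta_k^{mn}\rangle$; your bookkeeping checks out (the conditional state is $\tfrac{1}{\sqrt{d}}(\mathbf{1}\otimes D_a V_{mn}^{\ast}U_k)|\Omega\rangle$ in the convention I would use, but as you note the trace identity renders the overlap $\tfrac{1}{d}\mathrm{Tr}(D_a)$ invariant under any such unitary dressing, so the outcome probability $1/d^2$ and conditional success $F(\tau)$ are correct uniformly in $k$ and $(m,n)$). Both arguments rest on the same two facts, $\mathrm{Tr}(U_i^{\dagger}U_j)=d\,\delta_{ij}$ and $\langle\Omega|(\mathbf{1}\otimes X)|\Omega\rangle=\tfrac{1}{d}\mathrm{Tr}(X)$, and both are representation independent; what the paper's direction buys is that the only thing its teleportation can corrupt is the identity of the channel, which is precisely the datum being sought, so the reduction to a single-party discrimination of the states $\{|\gamma_i\rangle\}$ is immediate and requires no choice of Weyl basis $\{V_{mn}\}$. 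Your side remark that distilling $|\tau\rangle$ to a maximally entangled state and teleporting on success is suboptimal is a fair illustration, though the quoted figure $1/d^2$ for a product resource assumes one guesses blindly on failure rather than falling back to the unassisted optimum $1/d$; this does not affect the proof.
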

We will prove Lemma \ref{LOCC-prob} by presenting an LOCC protocol
that distinguishes between the elements of $S_{\Psi\otimes\tau}$
with probability $F\left(\tau\right)$. The protocol here is based
on quantum teleportation. 

Since the lower bound in Lemma \ref{LOCC-prob} matches the upper
bound in Corollary \ref{p_L<=00003Dp(PPT)-1}, this completes the
proof of Theorem \ref{main-theorem}. Furthermore, 
\begin{alignat}{1}
p_{_{{\rm L}}}\left(S_{\Psi\otimes\tau}\right) & =p_{_{{\rm PPT}}}\left(S_{\Psi\otimes\tau}\right)=F\left(\tau\right),\label{p(L)=00003Dp(PPT)}
\end{alignat}
which shows that, even though $\text{LOCC}\subset\text{PPT}$, the
LOCC optimum equals the PPT optimum. Thus PPT measurements (hence,
separable measurements) provide no advantage in distinguishing a maximally
entangled basis using shared entanglement. 

The rest of the paper is arranged as follows. In Sec.$\,$\ref{PPT-SDP},
we discuss the SDP formulation of distinguishing a set of states by
PPT measurements, including ours. We prove Lemma \ref{PPT-upper bound}
in Sec.$\,$\ref{Lemma upper bound} and Lemma \ref{LOCC-prob} in
Sec.$\,$\ref{Lemma-lower-bound}. In Sec.$\,$\ref{incomplete} we
present lower and upper bounds on the success probability for distinguishing
the elements of an incomplete orthonormal maximally entangled basis
in the same setup. We conclude in Sec.$\,$\ref{Conclusions} with
a brief summary of results and a discussion on open problems. 

\section{\label{PPT-SDP} PPT distinguishability as a semidefinite program}

The problem of distinguishing a set of states by PPT measurements
can be cast as a semidefinite program \citep{Cosentino-2013} and,
thereby, can be solved for many problems of interest. This, coupled
with the fact that ${\rm LOCC}\subset{\rm PPT}$ has yielded exact
results \citep{Cosentino-2013,Cosentino-Russo-2014,B-IQC-2015,Yu-duan-PPT-2014,BR-2021},
no-go results \citep{Yu-Duan-2012,Yu-duan-PPT-2014}, and useful bounds
\citep{Cosentino-2013,Yu-duan-PPT-2014} for local (in)distinguishability
problems that were once thought to be intractable.

Let $\mathcal{X}$ and $\mathcal{Y}$ represent $d$-dimensional state
spaces for $d\geqslant2$. Let ${\rm Pos}\left(\mathcal{X}\right)$,
${\rm Pos}\left(\mathcal{Y}\right)$, and ${\rm Pos}\left(\mathcal{X}\otimes\mathcal{Y}\right)$
denote the sets of positive semidefinite operators acting on $\mathcal{X}$,
$\mathcal{Y}$, and $\mathcal{X}\otimes\mathcal{Y}$, respectively.
An operator $P\in{\rm Pos}\left(\mathcal{X}\otimes\mathcal{Y}\right)$
is PPT if ${\rm T}_{\mathcal{X}}\left(P\right)\in{\rm Pos}\left(\mathcal{X}\otimes\mathcal{Y}\right)$,
where $\text{T}_{\mathcal{X}}$ represents partial transposition taken
in the standard basis of $\mathcal{X}$ (note that, as far as the
definition of PPT is concerned, partial transposition could be taken
with respect to any one of the state spaces $\mathcal{X}$ or $\mathcal{Y}$). 

Denote the set of all PPT operators acting on $\mathcal{X}\otimes\mathcal{Y}$
by $\text{PPT}\left(\mathcal{X}:\mathcal{Y}\right)$. The set $\text{PPT}\left(\mathcal{X}:\mathcal{Y}\right)$
is a closed, convex cone. A PPT measurement is defined by a collection
of measurement operators $\left\{ P_{1},\dots,P_{n}\right\} $, where
$P_{i}\in{\rm PPT}\left(\mathcal{X}:\mathcal{Y}\right)$ for each
$i=1,\dots,n$. 

For a given ensemble $\mathcal{E}=\left\{ \left(p_{i},\rho_{i}\right):i=1,\dots,n\right\} $,
where $\rho_{i}$ are density operators on $\mathcal{X}\otimes\mathcal{Y}$,
the problem of finding $p_{_{{\rm PPT}}}\left(\mathcal{E}\right)$
can be expressed as a semidefinite program \citep{Cosentino-2013}:
\begin{center}
\begin{eqnarray*}
{\rm \text{Primal problem}} & \hspace{7em} & \text{Dual problem}\\
\mathtt{maximize}:\;\sum_{i=1}^{n}p_{i}\text{Tr}\left(\rho_{i}P_{i}\right) &  & \mathtt{minimize}:\;\text{Tr}\left(H\right)\\
\mathtt{subject\,\mathtt{to}}:\;\sum_{i=1}^{n}P_{i}=\mathbf{1}_{\mathcal{X}\otimes\mathcal{Y}} &  & \mathtt{subject}\,\mathtt{to}:\;H-p_{k}\rho_{k}\in\text{PPT}\left(\mathcal{X}:\mathcal{Y}\right)\\
P_{k}\in\text{PPT}\left(\mathcal{X}:\mathcal{Y}\right) &  & H\in\text{Herm}\left(\mathcal{X}\otimes\mathcal{Y}\right)\\
\left(k=1,\dots,n\right) &  & \left(k=1,\dots,n\right)
\end{eqnarray*}
\par\end{center}

where $\text{Herm}\left(\mathcal{X}\otimes\mathcal{Y}\right)$ is
the set of Hermitian operators acting on $\mathcal{X}\otimes\mathcal{Y}$. 

Let $\omega$ denote the solution of the dual problem. By the weak
duality theorem, it holds that $p_{_{{\rm PPT}}}\left(\mathcal{E}\right)\leqslant\omega$.
Thus, every feasible solution of the dual problem provides an upper
bound on $p_{_{{\rm PPT}}}\left(\mathcal{E}\right)$. 

\subsection*{Distinguishing the elements of $S_{\Psi\otimes\tau}$ by PPT measurements:
SDP formulation}

Following the above prescription, the primal and dual problems for
distinguishing the elements of $S_{\Psi\otimes\tau}$ by a PPT measurement
are the following:
\begin{center}
\begin{eqnarray*}
\text{Primal problem} & \hspace{7em} & \text{Dual problem}\\
\mathtt{maximize}:\;\frac{1}{d^{2}}\sum_{i=1}^{d^{2}}\left\langle \Phi_{i}\right|\Omega_{i}\left|\Phi_{i}\right\rangle  &  & \mathtt{minimize}:\;\text{Tr}\left(H\right)\\
\mathtt{subject\,\mathtt{to}}:\;\sum_{i=1}^{d^{2}}\Omega_{i}=\mathbf{1}_{\mathcal{A}\otimes\mathcal{B}} &  & \mathtt{subject}\,\mathtt{to}:\;H-\Phi_{k}/d^{2}\in\text{PPT}\left(\mathcal{A}:\mathcal{B}\right)\\
\Omega_{k}\in\text{PPT}\left(\mathcal{A}:\mathcal{B}\right) &  & H\in\text{Herm}\left(\mathcal{A}\otimes\mathcal{B}\right)\\
\left(k=1,\dots,d^{2}\right) &  & \left(k=1,\dots,d^{2}\right)
\end{eqnarray*}
\par\end{center}

where $\Phi_{k}=\left|\Phi_{k}\right\rangle \left\langle \Phi_{k}\right|$$\text{ and Herm}\left(\mathcal{A}\otimes\mathcal{B}\right)$
is the set of Hermitian operators acting on $\mathcal{A}\otimes\mathcal{B}$. 

Therefore, any $H\in\text{Herm}\left(\mathcal{A}\otimes\mathcal{B}\right)$
for which the constraint of the dual problem is satisfied, it holds
that
\begin{alignat}{1}
p_{_{{\rm PPT}}}\left(S_{\Psi\otimes\tau}\right) & \leqslant{\rm Tr}\left(H\right).\label{p<=00003DTr(H)}
\end{alignat}
 So if we could find an appropriate $H$, we would immediately obtain
an upper bound on the local optimum $p_{_{{\rm L}}}\left(S_{\Psi\otimes\tau}\right)$
as $p_{_{{\rm L}}}\left(S_{\Psi\otimes\tau}\right)\leqslant p_{_{{\rm PPT}}}\left(S_{\Psi\otimes\tau}\right)$.

\section{\label{Lemma upper bound} Upper bound on the success probability:
Proof of Lemma \ref{PPT-upper bound}}

Our objective is to find an $H\in{\rm Herm}\left(\mathcal{A}\otimes\mathcal{B}\right)$
that satisfies the dual feasibility condition 
\begin{alignat}{1}
{\rm T}_{\mathcal{A}}\left(H-\Phi_{k}/d^{2}\right) & ={\rm T}_{\mathcal{A}}\left(H\right)-\frac{1}{d^{2}}{\rm T}_{\mathcal{A}}\left(\Phi_{k}\right)\in{\rm Pos}\left(\mathcal{A}\otimes\mathcal{B}\right),\;k=1,\dots,d^{2},\label{dual constraint}
\end{alignat}
and for which ${\rm Tr}\left(H\right)=F\left(\tau\right)$. Once an
appropriate $H$ is found, the proof will then follow from \eqref{p<=00003DTr(H)}. 

Let $\mathbb{H}$ be a Hermitian operator of the form 
\begin{alignat}{1}
\mathbb{H} & =\sum_{i=1}^{m}h_{1i}\otimes h_{2i}\in{\rm Herm}\left(\mathcal{A}_{1}\otimes\mathcal{B}_{1}\otimes\mathcal{A}_{2}\otimes\mathcal{B}_{2}\right),\label{Hprime-1}
\end{alignat}
where $h_{1i}\in{\rm Herm}\left(\mathcal{A}_{1}\otimes\mathcal{B}_{1}\right)$
and $h_{2i}\in{\rm Herm}\left(\mathcal{A}_{2}\otimes\mathcal{B}_{2}\right)$
for each $i=1,.\dots,m$. We will use the following lemma to find
a feasible solution of the dual problem. 
\begin{lem}
\label{H and Hprime} Suppose $\mathbb{H}$ is a Hermitian operator
of the form \eqref{Hprime-1} such that 
\begin{alignat}{1}
{\rm \left(T_{\mathcal{A}_{1}}\otimes{\rm T}_{\mathcal{A}_{2}}\right)}\left(\mathbb{H}-\frac{1}{d^{2}}\Psi_{k}\otimes\tau\right) & \in{\rm Pos}\left(\mathcal{A}_{1}\otimes\mathcal{B}_{1}\otimes\mathcal{A}_{2}\otimes\mathcal{B}_{2}\right)\label{PositivityH}
\end{alignat}
for each $k=1,\dots,d^{2}$, where $\Psi_{k}=\left|\Psi_{k}\right\rangle \left\langle \Psi_{k}\right|$
and $\tau=\left|\tau\right\rangle \left\langle \tau\right|$. Then
the dual feasibility condition \eqref{dual constraint} is satisfied
for 
\begin{alignat}{1}
H & =U_{\mathcal{B}_{1}\leftrightarrow\mathcal{A}_{2}}\mathbb{H}U_{\mathcal{B}_{1}\leftrightarrow\mathcal{A}_{2}}^{\dagger}\in{\rm Herm\left(\mathcal{A}\otimes\mathcal{B}\right)}.\label{H=00003DH^=00005Cprime}
\end{alignat}
\end{lem}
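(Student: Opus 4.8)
The plan is to reduce the dual feasibility condition \eqref{dual constraint}, which is stated in the bipartition arrangement $\mathcal{A}:\mathcal{B}$, to the hypothesized positivity \eqref{PositivityH}, which is stated in the unswapped arrangement $\mathcal{A}_{1}\otimes\mathcal{B}_{1}\otimes\mathcal{A}_{2}\otimes\mathcal{B}_{2}$. Writing $U:=U_{\mathcal{B}_{1}\leftrightarrow\mathcal{A}_{2}}$ for brevity, I would first record the two immediate consequences of the definition \eqref{H=00003DH^=00005Cprime} together with $\Phi_{k}=U\left(\Psi_{k}\otimes\tau\right)U^{\dagger}$: that $H$ is Hermitian, being a unitary conjugate of the Hermitian operator $\mathbb{H}$, and that
\begin{alignat}{1}
H-\frac{1}{d^{2}}\Phi_{k} & =U\left(\mathbb{H}-\frac{1}{d^{2}}\Psi_{k}\otimes\tau\right)U^{\dagger}.\nonumber
\end{alignat}
Everything then hinges on understanding how the partial transpose ${\rm T}_{\mathcal{A}}$ behaves under conjugation by the swap $U$.

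The key step is the covariance identity
\begin{alignat}{1}
{\rm T}_{\mathcal{A}}\left(UXU^{\dagger}\right) & =U\left[\left({\rm T}_{\mathcal{A}_{1}}\otimes{\rm T}_{\mathcal{A}_{2}}\right)(X)\right]U^{\dagger},\label{covariance}
\end{alignat}
valid for every Hermitian $X$ on $\mathcal{A}_{1}\otimes\mathcal{B}_{1}\otimes\mathcal{A}_{2}\otimes\mathcal{B}_{2}$. To justify it I would recall that transposition on the composite system $\mathcal{A}=\mathcal{A}_{1}\otimes\mathcal{A}_{2}$ factorizes as ${\rm T}_{\mathcal{A}}={\rm T}_{\mathcal{A}_{1}}\circ{\rm T}_{\mathcal{A}_{2}}$ when taken in the standard product basis, and that by \eqref{eU-swap} the operator $U$ acts on product basis vectors merely by relabeling which tensor slot is occupied by $\mathcal{B}_{1}$ and by $\mathcal{A}_{2}$; hence $U$ is a permutation, i.e. a real orthogonal matrix, in that basis. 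By linearity it then suffices to verify \eqref{covariance} on product operators $X=h_{1}\otimes h_{2}$ with $h_{1}\in{\rm Herm}\left(\mathcal{A}_{1}\otimes\mathcal{B}_{1}\right)$ and $h_{2}\in{\rm Herm}\left(\mathcal{A}_{2}\otimes\mathcal{B}_{2}\right)$: expanding each factor across its own $\mathcal{A}:\mathcal{B}$ bipartition and tracking the four indices through the swap shows that transposing the $\mathcal{A}$-block \emph{after} applying $U$ produces exactly the operator obtained by transposing the $\mathcal{A}_{1}$- and $\mathcal{A}_{2}$-blocks \emph{before} applying $U$. Both $\mathbb{H}$, by its assumed form \eqref{Hprime-1}, and $\Psi_{k}\otimes\tau$ are of this product form, so \eqref{covariance} applies to their difference.

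Finally, I would apply \eqref{covariance} with $X=\mathbb{H}-\tfrac{1}{d^{2}}\Psi_{k}\otimes\tau$, obtaining
\begin{alignat}{1}
{\rm T}_{\mathcal{A}}\left(H-\frac{1}{d^{2}}\Phi_{k}\right) & =U\left[\left({\rm T}_{\mathcal{A}_{1}}\otimes{\rm T}_{\mathcal{A}_{2}}\right)\left(\mathbb{H}-\frac{1}{d^{2}}\Psi_{k}\otimes\tau\right)\right]U^{\dagger}.\nonumber
\end{alignat}
The bracketed operator is positive semidefinite by the hypothesis \eqref{PositivityH}, and conjugation by the unitary $U$ preserves positive semidefiniteness, so the left-hand side lies in ${\rm Pos}\left(\mathcal{A}\otimes\mathcal{B}\right)$ for each $k=1,\dots,d^{2}$; this is exactly the dual feasibility condition \eqref{dual constraint}, and together with the Hermiticity of $H$ it completes the proof. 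I expect the only genuine obstacle to be establishing the covariance identity \eqref{covariance} cleanly: since transposition is basis dependent it does not commute with generic unitaries, so the argument must lean explicitly on $U$ being a permutation of the standard product basis and on the factorization ${\rm T}_{\mathcal{A}}={\rm T}_{\mathcal{A}_{1}}\circ{\rm T}_{\mathcal{A}_{2}}$. The remaining steps are routine bookkeeping.
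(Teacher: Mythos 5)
Your proposal is correct and follows essentially the same route as the paper: the paper's Appendix also reduces everything to the covariance identity between ${\rm T}_{\mathcal{A}}$ and $\left({\rm T}_{\mathcal{A}_{1}}\otimes{\rm T}_{\mathcal{A}_{2}}\right)$ under conjugation by the swap (its Lemma \ref{swap-T=00003DT-swap}, verified by the same index-tracking computation on product operators), and then invokes preservation of positive semidefiniteness under unitary conjugation. The only cosmetic difference is that the paper states the identity for product operators and applies it term by term, whereas you extend it by linearity to all Hermitian operators before applying it; the substance is identical.
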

The proof is given in the Appendix. \\
\\
First observe that the equality 
\begin{alignat}{1}
{\rm Tr}\left(H\right) & ={\rm Tr}\left(\mathbb{H}\right)\label{Tr(H)=00003DTr(Hprime)}
\end{alignat}
follows immediately from \eqref{H=00003DH^=00005Cprime}. 

Therefore, to prove Lemma \ref{PPT-upper bound} it is sufficient
to find an $\mathbb{H}$ of the form \eqref{Hprime-1} such that \eqref{PositivityH}
is satisfied for all $k=1,\dots,d^{2}$, and for which ${\rm Tr}\left(\mathbb{H}\right)=F\left(\tau\right)$. 

Define the operator 
\begin{alignat}{1}
\mathbb{H} & =\frac{1}{d^{3}}\mathbf{1}_{\mathcal{A}_{1}\otimes\mathcal{B}_{1}}\otimes\left[\tau+2\sum_{i,j=1,\,i<j}^{d}a_{i}a_{j}{\rm T}_{\mathcal{A}_{2}}\left(\psi_{ij}^{-}\right)\right]\in\text{Herm}\left(\mathcal{A}_{1}\otimes\mathcal{B}_{1}\otimes\mathcal{A}_{2}\otimes\mathcal{B}_{2}\right),\label{H}
\end{alignat}
where $\psi_{ij}^{-}=\left|\psi_{ij}^{-}\right\rangle \left\langle \psi_{ij}^{-}\right|$,
$\left|\psi_{ij}^{-}\right\rangle =\frac{1}{\sqrt{2}}\left(\left|i\right\rangle \left|j\right\rangle -\left|j\right\rangle \left|i\right\rangle \right)$,
and ${\rm T}_{\mathcal{A}_{2}}$ denotes the partial transposition
of with respect to the standard basis of $\mathcal{A}_{2}$. 

Observe that $\mathbb{H}$ has the form \eqref{Hprime-1}, and a simple
calculation shows 
\begin{alignat}{1}
\text{Tr}\left(\mathbb{H}\right) & =\frac{1}{d}\left(1+2\sum_{i,j=1,\,i<j}^{d}a_{i}a_{j}\right)=F\left(\tau\right),\label{Tr(H)}
\end{alignat}
where the second equality follows from \eqref{FEF(tau)}. We now prove
that the positivity condition \eqref{PositivityH} is satisfied. 

First, observe that
\begin{flalign}
\left({\rm T}_{\mathcal{A}_{1}}\otimes{\rm T}_{\mathcal{A}_{2}}\right)\left[\mathbb{H}-\frac{1}{d^{2}}\left(\Psi_{k}\otimes\tau\right)\right] & =\left({\rm T}_{\mathcal{A}_{1}}\otimes{\rm T}_{\mathcal{A}_{2}}\right)\mathbb{H}-\frac{1}{d^{2}}\left({\rm T}_{\mathcal{A}_{1}}\otimes{\rm T}_{\mathcal{A}_{2}}\right)\left(\Psi_{k}\otimes\tau\right)\nonumber \\
 & =\frac{1}{d^{3}}\mathbf{1}_{\mathcal{A}_{1}\otimes\mathcal{B}_{1}}\otimes\left[{\rm T}_{\mathcal{A}_{2}}\left(\tau\right)+2\sum_{i,j=1,\,i<j}^{d}a_{i}a_{j}\psi_{ij}^{-}\right]-\frac{1}{d^{2}}{\rm T}_{\mathcal{A}_{1}}\left(\Psi_{k}\right)\otimes{\rm T_{\mathcal{A}_{2}}}\left(\tau\right)\nonumber \\
 & =\frac{1}{d^{3}}\Upsilon_{k}\otimes{\rm T}_{\mathcal{A}_{2}}\left(\tau\right)+\mathbf{1}_{\mathcal{A}_{1}\otimes\mathcal{B}_{1}}\otimes\frac{2}{d^{3}}\sum_{i,j=1,\,i<j}^{d}a_{i}a_{j}\psi_{ij}^{-}.\label{Theta-=00005CLambda}
\end{flalign}
where
\begin{alignat}{1}
\Upsilon_{k} & =\mathbf{1}_{\mathcal{A}_{1}\otimes\mathcal{B}_{1}}-d{\rm T}_{\mathcal{A}_{1}}\left(\Psi_{k}\right).\label{upsilon-k}
\end{alignat}
 We will now evaluate the right-hand-side (RHS) of \eqref{Theta-=00005CLambda}. 

A straightforward calculation reveals

\begin{flalign}
{\rm T_{\mathcal{A}_{2}}}\left(\tau\right) & =\sum_{i=1}^{d}a_{i}^{2}\psi_{ii}+\sum_{i,j=1,\,i<j}^{d}a_{i}a_{j}\psi_{ij}^{+}-\sum_{i,j=1,\,i<j}^{d}a_{i}a_{j}\psi_{ij}^{-},\label{T(Tau)}
\end{flalign}
where $\psi_{ii}=\left|\psi_{ii}\right\rangle \left\langle \psi_{ii}\right|$,
$\left|\psi_{ii}\right\rangle =\left|i\right\rangle \left|i\right\rangle $
and $\psi_{ij}^{+}=\left|\psi_{ij}^{+}\right\rangle \left\langle \psi_{ij}^{+}\right|$,
$\left|\psi_{ij}^{+}\right\rangle =\frac{1}{\sqrt{2}}\left(\left|i\right\rangle \left|j\right\rangle +\left|j\right\rangle \left|i\right\rangle \right)$.
Let 
\begin{alignat}{1}
\Gamma & =\sum_{i=1}^{d}a_{i}^{2}\psi_{ii}+\sum_{i,j=1,\,i<j}^{d}a_{i}a_{j}\psi_{ij}^{+}\label{Gamma}
\end{alignat}
which is clearly positive semidefinite (since $a_{i}\geqslant0$ for
all $i=1,\dots,d$, and $\left\{ \psi_{ii}\right\} $ and $\left\{ \psi_{ij}^{+}\right\} $
are density operators) and write \eqref{T(Tau)} in a compact form
\begin{alignat}{1}
{\rm T_{\mathcal{A}_{2}}}\left(\tau\right) & =\Gamma-\sum_{i,j=1,\,i<j}^{d}a_{i}a_{j}\psi_{ij}^{-}.\label{T(Tau)-1}
\end{alignat}
Using \eqref{T(Tau)-1} in \eqref{Theta-=00005CLambda} and simplifying,
we get
\begin{alignat}{1}
\left({\rm T}_{\mathcal{A}_{1}}\otimes{\rm T}_{\mathcal{A}_{2}}\right)\left[\mathbb{H}-\frac{1}{d^{2}}\left(\Psi_{k}\otimes\tau\right)\right] & =\frac{1}{d^{3}}\Upsilon_{k}\otimes\Gamma+\frac{2}{d^{3}}\sum_{i,j=1,\,i<j}^{d}a_{i}a_{j}\left(\mathbf{1}_{\mathcal{A}_{1}\otimes\mathcal{B}_{1}}-\frac{1}{2}\Upsilon_{k}\right)\otimes\psi_{ij}^{-}.\label{Final term}
\end{alignat}
Since $\Gamma$ and the density operators $\left\{ \psi_{ij}^{-}\right\} $,
both being independent of $k$, are positive semidefinite, it suffices
to prove that $\Upsilon_{k}$ and $\left(\mathbf{1}_{\mathcal{A}_{1}\otimes\mathcal{B}_{1}}-\frac{1}{2}\Upsilon_{k}\right)$
are positive semidefinite for each $k=1,\dots,d^{2}$. 

To calculate $\Upsilon_{k}$, we need to first compute ${\rm T}_{\mathcal{A}_{1}}\left(\Psi_{k}\right)$.
This requires expressing $\left|\Psi_{k}\right\rangle $ in an orthonormal
product basis of the form $\left\{ \left|\mu_{i}\right\rangle \left|\nu_{j}\right\rangle \right\} _{i,j=1}^{d}$,
where $\left\{ \left|\mu_{i}\right\rangle \right\} _{i=1}^{d}$ and
$\left\{ \left|\nu_{j}\right\rangle \right\} _{j=1}^{d}$ are the
orthonormal bases of $\mathcal{A}_{1}$ and $\mathcal{B}_{1}$, respectively,
and then taking the partial transpose in the basis of $\mathcal{A}_{1}$.
Since positive semidefiniteness of the operators under consideration
does not depend on the choice of the product basis used for taking
the partial transposition, it will be convenient to take this product
basis as the Schmidt basis of $\left|\Psi_{k}\right\rangle $ for
each $k$ (note that the Schmidt basis, in general, will be different
for different $\left|\Psi_{k}\right\rangle $). We therefore write
$\left|\Psi_{k}\right\rangle $ in the Schmidt form
\begin{alignat}{1}
\left|\Psi_{k}\right\rangle  & =\frac{1}{\sqrt{d}}\sum_{i=1}^{d}\left|\alpha_{k}^{i}\right\rangle \left|\beta_{k}^{i}\right\rangle ,\label{Psi_k-Schmidt}
\end{alignat}
where $\left\{ \left|\alpha_{k}^{1}\right\rangle ,\dots,\left|\alpha_{k}^{d}\right\rangle \right\} $
and $\left\{ \left|\beta_{k}^{1}\right\rangle ,\dots,\left|\beta_{k}^{d}\right\rangle \right\} $
are orthonormal bases of $\mathcal{A}_{1}$ and $\mathcal{B}_{1}$,
respectively. It holds that 
\begin{alignat}{1}
{\rm {\rm T}}_{\mathcal{A}_{1}}\left(\Psi_{k}\right) & =\frac{1}{d}\left[\sum_{i}\xi_{k}^{ii}+\sum_{i,j=1,\,i<j}^{d}\xi_{k}^{ij^{+}}-\sum_{i,j=1,\,i<j}^{d}\xi_{k}^{ij^{-}}\right],\label{T(Psi)}
\end{alignat}
where $\xi_{k}^{ii}=$ $\left|\xi_{k}^{ii}\right\rangle \left\langle \xi_{k}^{ii}\right|$,
$\left|\xi_{k}^{ii}\right\rangle =\left|\alpha_{k}^{i}\right\rangle \left|\beta_{k}^{i}\right\rangle $;
$\xi_{k}^{ij^{\pm}}=\left|\xi_{k}^{ij^{\pm}}\right\rangle \left\langle \xi_{k}^{ij^{\pm}}\right|$,
$\left|\xi_{k}^{ij^{\pm}}\right\rangle =\frac{1}{\sqrt{2}}\left(\left|\alpha_{k}^{i}\right\rangle \left|\beta_{k}^{j}\right\rangle \pm\left|\alpha_{k}^{j}\right\rangle \left|\beta_{k}^{i}\right\rangle \right)$. 

Now decomposing $\mathbf{1}_{\mathcal{A}_{1}\otimes\mathcal{B}_{1}}$
as
\begin{alignat}{1}
\mathbf{1}_{\mathcal{A}_{1}\otimes\mathcal{B}_{1}} & =\sum_{i=1}^{d}\xi_{k}^{ii}+\sum_{i,j=1,\,i<j}^{d}\xi_{k}^{ij^{+}}+\sum_{i,j=1,\,i<j}^{d}\xi_{k}^{ij^{-}}\label{I}
\end{alignat}
and using \eqref{I} and \eqref{T(Psi)} in \eqref{upsilon-k}, we
get 
\begin{alignat}{1}
\Upsilon_{k} & =2\sum_{i,j=1,\,i<j}^{d}\xi_{k}^{ij^{-}},\label{Upsilon}
\end{alignat}
which is positive semidefinite. Similarly, from \eqref{I} and \eqref{Upsilon},
we get

\begin{alignat}{1}
\left(\mathbf{1}_{\mathcal{A}_{1}\otimes\mathcal{B}_{1}}-\frac{1}{2}\Upsilon_{k}\right) & =\sum_{i=1}^{d}\xi_{k}^{ii}+\sum_{i,j=1,\,i<j}^{d}\xi_{k}^{ij^{+}},\label{1-1/2=00005CUpsilon}
\end{alignat}
which is also positive semidefinite. 

The above analysis clearly holds for any $k\in\left\{ 1,\dots,d^{2}\right\} $,
and so we have proved \eqref{PositivityH} for all $k=1,\dots,d^{2}$.
This completes the proof of Lemma \ref{PPT-upper bound}. 

\section{\label{Lemma-lower-bound} Lower bound on the success probability:
Proof of Lemma \ref{LOCC-prob}}

The LOCC protocol for distinguishing the elements of $S_{\Psi\otimes\tau}$
is based on quantum teleportation. However, before we describe the
protocol, it will be helpful to understand how states transform under
teleportation. 

Recall that in the standard teleportation protocol of a $d$-dimensional
quantum state, the teleportation channel is taken to be the maximally
entangled state $\left|\Psi_{1}\right\rangle $ given by \eqref{Psi(1)},
Alice performs her measurement in the canonical maximally entangled
basis, communicates her outcome to Bob via a classical channel, and
Bob applies a unitary correction chosen from a fixed set of unitary
operators. Note that the unitary operator that Bob applies depends
on the outcome of Alice's measurement; in particular, there is an
one-to-one correspondence between Alice's outcome and Bob's unitary
correction. 

Suppose Alice wants to teleport a quantum state, say, $\left|\varphi\right\rangle \in\mathbb{C}^{d}$,
but instead of $\left|\Psi_{1}\right\rangle $ they share a maximally
entangled state $\left|\varPsi_{x}\right\rangle \in\mathcal{A}_{1}\otimes\mathcal{B}_{1}$,
where $\left|\varPsi_{x}\right\rangle =\left(\mathbf{1}_{\mathcal{A}_{1}}\otimes V_{x}\right)\left|\Psi_{1}\right\rangle $
for some unitary operator $V_{x}\in U\left(\mathcal{B}_{1}\right)$.
It is easy to check that if they carry out all the steps of the standard
protocol prescribed for $\left|\Psi_{1}\right\rangle $, Bob will
end up with the state $\left|\varphi_{x}\right\rangle =V_{x}\left|\varphi\right\rangle $.
Therefore, to reproduce the teleportation input correctly Bob needs
to apply another unitary correction $V_{x}^{-1}$ on $\left|\varphi_{x}\right\rangle $. 

Now suppose that Alice and Bob do not know the identity of $\left|\varPsi_{x}\right\rangle $
but only that it belongs to a known set, i.e., $\left|\varPsi_{x}\right\rangle \in\left\{ \left|\varPsi_{x_{1}}\right\rangle ,\dots,\left|\varPsi_{x_{n}}\right\rangle \right\} $,
where $\left|\varPsi_{x_{i}}\right\rangle =\left(\mathbf{1}_{\mathcal{A}_{1}}\otimes V_{x_{i}}\right)\left|\Psi_{1}\right\rangle $.
It then follows that $\left|\varphi_{x}\right\rangle \in\left\{ \left|\varphi_{x_{i}}\right\rangle ,\dots,\left|\varphi_{x_{n}}\right\rangle \right\} $,
where $\left|\varphi_{x_{i}}\right\rangle =V_{x_{i}}\left|\varphi\right\rangle $.
In this case, the input state cannot be exactly reproduced at Bob's
end because they do not know which maximally entangled state they
shared. 

This can be extended to teleportation of an entangled system. Suppose
the state Alice wants to teleport is $\left|\Theta\right\rangle \in\mathcal{A}^{\prime\prime}\otimes\mathcal{A}^{\prime}$,
where $\mathcal{A}^{\prime}=\mathcal{A}^{\prime\prime}=\mathbb{C}^{d}$.
Once again assume that the teleportation channel is $\left|\varPsi_{x}\right\rangle $.
Then at the end of the standard protocol, assuming Alice performs
her measurement on the composite system $\mathcal{A}^{\prime}\otimes\mathcal{A}_{1}$,
they will end up sharing $\left|\Theta_{x}\right\rangle \in\mathcal{A}^{\prime\prime}\otimes\mathcal{B}_{1}$,
where $\left|\Theta_{x}\right\rangle =\left(\mathbf{1}_{\mathcal{A}^{\prime\prime}}\otimes V_{x}\right)\left|\Theta\right\rangle $.
Moreover, if $\left|\varPsi_{x}\right\rangle \in\left\{ \left|\varPsi_{x_{1}}\right\rangle ,\dots,\left|\varPsi_{x_{n}}\right\rangle \right\} $,
then $\left|\Theta_{x}\right\rangle \in\left\{ \left|\Theta_{x_{i}}\right\rangle ,\dots,\left|\Theta_{x_{n}}\right\rangle \right\} $,
where $\left|\Theta_{x_{i}}\right\rangle =\left(\mathbf{1}_{\mathcal{A}^{\prime\prime}}\otimes V_{x_{i}}\right)\left|\Theta\right\rangle $. 

We now come to the LOCC protocol for distinguishing the elements of
$S_{\Psi\otimes\tau}$. The first step is teleportation of one half
of the resource state $\left|\tau\right\rangle $ using the unknown
state, say, $\left|\Psi_{i}\right\rangle \in\left\{ \left|\Psi_{1}\right\rangle ,\dots,\left|\Psi_{d^{2}}\right\rangle \right\} $
following the standard protocol. Noting that $\left|\Psi_{i}\right\rangle $
has the form \eqref{Psi(k)}, the initial state can be written as
\begin{alignat}{1}
\left|\Psi_{i}\right\rangle \otimes\left|\tau\right\rangle  & =\left(\mathbf{1}_{\mathcal{A}_{1}}\otimes U_{i}\right)\left|\Psi_{1}\right\rangle \otimes\left|\tau\right\rangle \in\mathcal{A}_{1}\otimes\mathcal{B}_{1}\otimes\mathcal{A}_{2}\otimes\mathcal{B}_{2}.\label{initial state}
\end{alignat}
Now they complete all the steps of the standard protocol (Alice performs
her measurement on $\mathcal{A}_{1}\otimes\mathcal{A}_{2}$, informs
Bob about the outcome, who applies the prescribed unitary correction
on $\mathcal{B}_{1}$). From our previous discussion, we know that
this results in Bob holding the state 
\begin{alignat}{1}
\left|\gamma_{i}\right\rangle  & =\left(\mathbf{1}_{\mathcal{B}_{2}}\otimes U_{i}\right)\left|\tau\right\rangle \in\mathcal{B}_{2}\otimes\mathcal{B}_{1}.\label{gamma-i}
\end{alignat}
Therefore, after teleportation, Bob holds one of $\left\{ \left|\gamma_{1}\right\rangle ,\dots,\left|\gamma_{d^{2}}\right\rangle \right\} \subset\mathcal{B}_{2}\otimes\mathcal{B}_{1}$.
The inner product between any pair of states is found to be 
\begin{alignat}{1}
\left\langle \gamma_{i}\vert\gamma_{j}\right\rangle  & =\sum_{k=1}^{d}a_{k}^{2}\left\langle k\left|U_{i}^{\dagger}U_{j}\right|k\right\rangle .\label{gammai-gamma-j}
\end{alignat}
As one would expect, the states $\left|\gamma_{i}\right\rangle $
are not all mutually orthogonal unless $\left|\tau\right\rangle $
is maximally entangled, in which case we have $a_{k}^{2}=1/d$ for
$k=1,\dots,d$, and $\left\langle \gamma_{i}\vert\gamma_{j}\right\rangle =\frac{1}{d}\text{Tr}\left(U_{i}^{\dagger}U_{j}\right)=\delta_{ij}$. 

Now observe that the elements of $\left\{ \left|\Psi_{1}\right\rangle ,\dots,\left|\Psi_{d^{2}}\right\rangle \right\} $
are in one-to-one correspondence with that of $\left\{ \left|\gamma_{1}\right\rangle ,\dots,\left|\gamma_{d^{2}}\right\rangle \right\} $,
i.e., if the unknown state was $\left|\Psi_{i}\right\rangle $, Bob's
state is guaranteed to be $\left|\gamma_{i}\right\rangle $. Therefore,
the ``teleportation'' step of the protocol induces the map $\left|\Psi_{i}\right\rangle \rightarrow\left|\gamma_{i}\right\rangle $
for each $i=1,\dots,d^{2}$, and consequently, the problem is mapped
onto that of distinguishing between the elements of $\left\{ \left|\gamma_{1}\right\rangle ,\dots,\left|\gamma_{d^{2}}\right\rangle \right\} $,
each given with probability $1/d^{2}$ (since the unknown state $\left|\Psi_{i}\right\rangle $
was selected with probability $1/d^{2}$). 

To distinguish between the states $\left|\gamma_{i}\right\rangle $
Bob performs a measurement in the orthonormal, maximally entangled
basis $\left\{ \left|\Psi_{1}\right\rangle ,\dots,\left|\Psi_{d^{2}}\right\rangle \right\} \subset\mathcal{B}_{2}\otimes\mathcal{B}_{1}$,
where the basis vectors are defined as $\left|\Psi_{i}\right\rangle =\left(\mathbf{1}_{\mathcal{B}_{2}}\otimes U_{i}\right)\left|\Psi_{1}\right\rangle $.
The outcome of this measurement is now concluded as the unknown state
that they initially shared. The success probability for this is given
by 
\begin{alignat}{1}
p & =\frac{1}{d^{2}}\sum_{i=1}^{d^{2}}\left|\left\langle \Psi_{i}\vert\gamma_{i}\right\rangle \right|^{2}\nonumber \\
 & =\frac{1}{d^{2}}\sum_{i=1}^{d^{2}}\left|\left\langle \Psi_{1}\vert\left(\mathbf{1}_{\mathcal{B}_{2}}\otimes U_{i}^{\dagger}\right)\left(\mathbf{1}_{\mathcal{B}_{2}}\otimes U_{i}\right)\vert\tau\right\rangle \right|^{2}\nonumber \\
 & =\frac{1}{d^{2}}\sum_{i=1}^{d^{2}}\left|\left\langle \Psi_{1}\vert\tau\right\rangle \right|^{2}\nonumber \\
 & =\left|\left\langle \Psi_{1}\vert\tau\right\rangle \right|^{2}=\frac{1}{d}\left(\sum_{i=1}^{d}a_{i}\right)^{2}=F\left(\tau\right).\label{p(L)-lower-bound}
\end{alignat}
This completes the proof. 

\section{\label{incomplete} Distinguishing the elements of an incomplete
orthonormal maximally entangled basis }

Let us now suppose that the we are given a set of $N$ orthonormal
maximally entangled states in $\mathcal{A}_{1}\otimes\mathcal{B}_{1}$
given by 
\begin{alignat}{1}
\mathcal{S}_{\Psi} & =\left\{ \left(\frac{1}{N},\left|\Psi_{k}\right\rangle \right):k=1,\dots,N\right\} ,\label{S-prime-Psi}
\end{alignat}
where $N\in\left[d+1,d^{2}\right]$, and the states are equally likely.
Such a set is locally indistinguishable \citep{Nathanson-2005}. Note
that sets of $N\leqslant d$ maximally entangled states may or may
not be locally indistinguishable, so we will leave them out from our
discussion. 

Once again the resource state is taken to be $\left|\tau\right\rangle \in\mathcal{A}_{2}\otimes\mathcal{B}_{2}$
given by \eqref{tau}. Thus the set of interest is
\begin{alignat}{1}
\mathcal{S}_{\Psi\otimes\tau} & =\left\{ \left(\frac{1}{N},\left|\Psi_{k}\right\rangle \otimes\left|\tau\right\rangle \right):k=1,\dots,N\right\} ,\label{Psi=00005Cotimestau}
\end{alignat}
where $N\in\left[d+1,d^{2}\right]$. In this case, we have been able
to obtain upper and lower bounds on the local success probability. 

Proceeding along the lines of $N=d^{2}$ with suitable modification
of $\mathbb{H}$ given by \eqref{H}, one finds that
\begin{alignat}{1}
p_{_{{\rm L}}}\left(\mathcal{S}_{\Psi\otimes\tau}\right) & \leqslant p_{_{{\rm PPT}}}\left(\mathcal{S}_{\Psi\otimes\tau}\right)\leqslant\frac{d^{2}}{N}F\left(\tau\right);N\in\left[d+1,d^{2}\right].\label{eqTheorem-1-1}
\end{alignat}
Note that for $N=d^{2}$, we get back our familiar bound derived in
the previous section. Further note that the upper bound is nontrivial
whenever $F\left(\tau\right)\leqslant N/d^{2}$. In particular, it
follows that if the elements of $\mathcal{S}_{\Psi}$ can be perfectly
distinguished using LOCC and a resource state $\left|\tau\right\rangle $,
then $F\left(\tau\right)\geqslant N/d^{2}$. Also observe that if
$\left|\tau\right\rangle $ is a product state (that is, no entanglement),
its fully entangled fraction is given by $1/d$, in which case the
upper bound reduces to $d/N$ reproducing the result of Nathanson
\citep{Nathanson-2005}. 

To obtain a lower bound we once again take recourse to the teleportation
protocol. The first step is identical and after its completion Bob
now needs to distinguish between the elements of $\left\{ \left|\gamma_{1}\right\rangle ,\dots,\left|\gamma_{N}\right\rangle \right\} $,
where $N\in\left[d+1,d^{2}\right]$. Earlier, for $N=d^{2}$, Bob
performed a measurement in the orthonormal, maximally entangled basis
$\left\{ \left|\Psi_{1}\right\rangle ,\dots,\left|\Psi_{d^{2}}\right\rangle \right\} \subset\mathcal{B}_{2}\otimes\mathcal{B}_{1}$,
where $\left|\Psi_{i}\right\rangle =\left(\mathbf{1}_{\mathcal{B}_{2}}\otimes U_{i}\right)\left|\Psi_{1}\right\rangle $,
and the outcome of this measurement was concluded as the unknown state
they initially shared. 

Now the situation is different, so there are several possibilities.
Here we discuss one such possibility and briefly sketch another. Bob
performs a measurement in an orthonormal basis $\left\{ \left|\Psi_{1}\right\rangle ,\dots,\left|\Psi_{N}\right\rangle ,\left|\psi_{1}^{\prime}\right\rangle ,\dots,\left|\psi_{d^{2}-N}^{\prime}\right\rangle \right\} \subset\mathcal{B}_{2}\otimes\mathcal{B}_{1}$,
where $\left|\Psi_{i}\right\rangle =\left(\mathbf{1}_{\mathcal{B}_{2}}\otimes U_{i}\right)\left|\Psi_{1}\right\rangle $
for $i=1,\dots,N$, and $\left|\psi_{1}^{\prime}\right\rangle ,\dots,\left|\psi_{d^{2}-N}^{\prime}\right\rangle $
are orthonormal states that may or may not be maximally entangled.
For outcomes $1,\dots,N$, the strategy would be the same - for the
$i$-th outcome, the unknown state is concluded as $\left|\Psi_{i}\right\rangle $.
The question is what should the mapping be when the outcome is one
of the prime states. Here one could adopt the following strategy:
Given such an outcome, one first identifies which state amongst $\left|\gamma_{1}\right\rangle ,\dots,\left|\gamma_{N}\right\rangle $
is most likely to give this outcome: if this state is $\left|\gamma_{x}\right\rangle $,
the unknown state is concluded as $\left|\Psi_{x}\right\rangle $
for $x\in\left[1,N\right]$. With this, one can compute the success
probability as
\begin{alignat}{1}
p & =\frac{1}{N}\sum_{i=1}^{N}\left|\left\langle \Psi_{i}\vert\gamma_{i}\right\rangle \right|^{2}+\frac{1}{N}\sum_{i=1}^{d^{2}-N}\max_{j}\left|\left\langle \psi_{i}^{\prime}\vert\gamma_{j}\right\rangle \right|^{2}\nonumber \\
 & =F\left(\tau\right)+\frac{1}{N}\sum_{i=1}^{d^{2}-N}\max_{j}\left|\left\langle \psi_{i}^{\prime}\vert\gamma_{j}\right\rangle \right|^{2},\label{p-lowe}
\end{alignat}
which is a lower bound on $p_{_{{\rm L}}}\left(\mathcal{S}_{\Psi\otimes\tau}\right)$.
The second term on the right-hand side can be exactly computed. Note
that we have complete freedom to choose the states $\left|\psi_{1}^{\prime}\right\rangle ,\dots,\left|\psi_{d^{2}-N}^{\prime}\right\rangle $
and the choice should be such that the second term is maximized. 

Alternatively, one can go for a simpler measurement with $N+1$ outcomes,
where the $\left(N+1\right)$-th operator is a projection operator
onto the subspace orthogonal to the subspace spanned by $\left|\Psi_{1}\right\rangle ,\dots,\left|\Psi_{N}\right\rangle $.
In this case, whenever this outcome clicks we simply pick the state
$\left|\gamma_{x}\right\rangle $ that is most likely to produce this
outcome and conclude about the unknown state accordingly. 

Combining \eqref{eqTheorem-1-1} and \eqref{p-lowe}, we therefore
have
\begin{eqnarray}
F\left(\tau\right)+\frac{1}{N}\sum_{i=1}^{d^{2}-N}\max_{j}\left|\left\langle \psi_{i}^{\prime}\vert\gamma_{j}\right\rangle \right|^{2}\leqslant & p_{_{{\rm L}}}\left(\mathcal{S}_{\Psi\otimes\tau}\right) & \leqslant\frac{d^{2}}{N}F\left(\tau\right);\;N\in\left[d+1,d^{2}\right].\label{lower and upper}
\end{eqnarray}
This concludes our discussion on distinguishing the elements of an
incomplete maximally entangled basis using a partially entangled state
as a resource. 

\section{\label{Conclusions} Conclusions }

One of the central questions in local distinguishability of quantum
states is how well a given set of locally indistinguishable states
can be distinguished using LOCC and shared entanglement as a resource.
In this paper, we considered the problem of locally distinguishing
the elements of a bipartite maximally entangled orthonormal basis
using a partially entangled state acting as a resource. In particular,
our objective was to find an expression for the success probability,
which quantifies how well the basis states can be distinguished in
an entanglement-assisted LOCC setup. This has been solved for the
Bell basis in $\mathbb{C}^{2}\otimes\mathbb{C}^{2}$ \citep{B-IQC-2015}
but remained open in all higher dimensions. Here, we solved this problem
for any maximally entangled orthonormal basis in dimensions $\mathbb{C}^{d}\otimes\mathbb{C}^{d}$,
where $d\geqslant3$. Assuming the basis states are uniformly distributed,
we derived an exact formula for the success probability. This formula
corresponds to the fully entangled fraction of the resource state.
So, how well the elements of a maximally entangled basis can be locally
distinguished using shared entanglement is determined by how close
the resource state is to a maximally entangled state. 

To derive our formula, we proceeded as follows. First, we established
an upper bound on the success probability using the SDP formulation
of distinguishing quantum states by PPT measurements \citep{Cosentino-2013}
and then showed that this upper bound is achievable by LOCC. The LOCC
protocol here is based on quantum teleportation. One further concludes
that, for distinguishing maximally entangled states with shared entanglement,
PPT measurements provide no advantage over LOCC, even though LOCC
is a strict subset of PPT measurements.

We now briefly mention a couple of open problems. Let $S\subset\mathbb{C}^{d}\otimes\mathbb{C}^{d}$
for $d\geqslant2$ be an orthonormal set of maximally entangled states.
Such a set is locally indistinguishable whenever $d+1\leqslant\left|S\right|\leqslant d^{2}$
\citep{Nathanson-2005}. In addition, locally indistinguishable sets
with $\left|S\right|\leqslant d$ also exist for $d\geqslant4$ \citep{Cosentino-2013,Cosentino-Russo-2014}.
A fundamental question is whether a maximally entangled state is always
necessary to perfectly distinguish an orthonormal set of maximally
entangled states known to be locally indistinguishable. This question
is completely solved for $d=2$ \citep{B-IQC-2015} but remains open
in higher dimensions except when $S$ is a basis of $\mathbb{C}^{d}\otimes\mathbb{C}^{d}$.
One could attempt to answer this question in higher dimensions by
computing the entanglement cost or the success probability within
the framework of entanglement-assisted LOCC as discussed in this paper
or earlier works \citep{B-IQC-2015,BGKW-2009,BRW-2010,BHN-2018,BR-2021}.
The latter approach, which involves computing the success probability,
as we did in this paper for $\left|S\right|=d^{2}$, could be fruitful
as one could directly check whether the resource state needs to be
maximally entangled or not by setting the success probability to unity.
In this paper, we made some progress by presenting upper and lower
bounds on the success probability, both of which can be exactly computed.
\\
\\

\begin{acknowledgement*}
We thank Tathagata Gupta and Shayeef Murshid for helpful comments. 
\end{acknowledgement*}

\section*{Appendix: Proof of Lemma \ref{H and Hprime}}

To prove Lemma \ref{H and Hprime} we will use the following result.
Let ${\rm Lin}\left(\mathcal{H}\otimes\mathcal{H}^{\prime}\right)$
denote the set of linear operators on $\mathcal{H}\otimes\mathcal{H}^{\prime}$,
where $\mathcal{H}$ and $\mathcal{H}^{\prime}$ are finite-dimensional
state spaces. 
\begin{lem}
\label{swap-T=00003DT-swap} Let $\mathcal{X}_{1}=\mathcal{X}_{2}=\mathcal{Y}_{1}=\mathcal{Y}_{2}=\mathbb{C}^{d}$,
where $d\geqslant2$. For any pair of linear operators $\Lambda\in{\rm Lin}\left(\mathcal{X}_{1}\otimes\mathcal{Y}_{1}\right)$
and $\Xi\in{\rm Lin}\left(\mathcal{X}_{2}\otimes\mathcal{Y}_{2}\right)$,
it holds that 
\begin{alignat}{1}
U_{\mathcal{Y}_{1}\leftrightarrow\mathcal{X}_{2}}\left[{\rm \left(T_{\mathcal{X}_{1}}\otimes{\rm T}_{\mathcal{X}_{2}}\right)}\left(\Lambda\otimes\Xi\right)\right]U_{\mathcal{Y}_{1}\leftrightarrow\mathcal{X}_{2}}^{\dagger} & ={\rm T}_{\mathcal{X}}\left[U_{\mathcal{Y}_{1}\leftrightarrow\mathcal{X}_{2}}\left(\Lambda\otimes\Xi\right)U_{\mathcal{Y}_{1}\leftrightarrow\mathcal{X}_{2}}^{\dagger}\right],\label{identity-1-1}
\end{alignat}
where $U_{\mathcal{Y}_{1}\leftrightarrow\mathcal{X}_{2}}$ is the
unitary swap operator defined similarly as in \eqref{eU-swap} and
$\mathcal{X}=\mathcal{X}_{1}\otimes\mathcal{X}_{2}$. 
\end{lem}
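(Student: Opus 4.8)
The plan is to exploit the fact that every map appearing in \eqref{identity-1-1} is linear in $\Lambda$ and, separately, in $\Xi$: conjugation by the swap unitary, the combined partial transpose $\mathrm{T}_{\mathcal{X}_1}\otimes\mathrm{T}_{\mathcal{X}_2}$, and the partial transpose $\mathrm{T}_{\mathcal{X}}$ are all linear. Consequently it suffices to verify \eqref{identity-1-1} on a spanning set of product operators, and the convenient choice is matrix units. Writing $E_{ab}=|a\rangle\langle b|$ in the standard basis, I would take $\Lambda=E_{ii'}\otimes E_{jj'}$ on $\mathcal{X}_1\otimes\mathcal{Y}_1$ and $\Xi=E_{kk'}\otimes E_{ll'}$ on $\mathcal{X}_2\otimes\mathcal{Y}_2$; the general case then follows by bilinear extension.

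The one structural fact I would isolate first is that partial transposition taken in a product basis factorizes across a tensor product: since the standard basis of $\mathcal{X}=\mathcal{X}_1\otimes\mathcal{X}_2$ is the product of the standard bases of $\mathcal{X}_1$ and $\mathcal{X}_2$, transposition on $\mathcal{X}$ sends $|ab\rangle\langle a'b'|$ to $|a'b'\rangle\langle ab|$, i.e.\ $\mathrm{T}_{\mathcal{X}}=\mathrm{T}_{\mathcal{X}_1}\otimes\mathrm{T}_{\mathcal{X}_2}$ as maps on $\mathrm{Lin}(\mathcal{X})$. The second elementary fact is that conjugation by $U_{\mathcal{Y}_1\leftrightarrow\mathcal{X}_2}$, defined as in \eqref{eU-swap}, merely relocates the $\mathcal{Y}_1$ and $\mathcal{X}_2$ tensor factors of a product operator, carrying $A\otimes B\otimes C\otimes D$ (ordered as $\mathcal{X}_1,\mathcal{Y}_1,\mathcal{X}_2,\mathcal{Y}_2$) to $A\otimes C\otimes B\otimes D$ (ordered as $\mathcal{X}_1,\mathcal{X}_2,\mathcal{Y}_1,\mathcal{Y}_2$), the latter being the ordering that exhibits $\mathcal{X}=\mathcal{X}_1\otimes\mathcal{X}_2$ and $\mathcal{Y}=\mathcal{Y}_1\otimes\mathcal{Y}_2$.

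With these in hand the verification is index bookkeeping. On the left-hand side, $\mathrm{T}_{\mathcal{X}_1}\otimes\mathrm{T}_{\mathcal{X}_2}$ transposes only the $\mathcal{X}_1$ and $\mathcal{X}_2$ factors, producing $E_{i'i}\otimes E_{jj'}\otimes E_{k'k}\otimes E_{ll'}$, and the subsequent swap conjugation reorders this to $E_{i'i}\otimes E_{k'k}\otimes E_{jj'}\otimes E_{ll'}$. On the right-hand side, the swap is applied first, giving $E_{ii'}\otimes E_{kk'}\otimes E_{jj'}\otimes E_{ll'}$ in the $\mathcal{X}\otimes\mathcal{Y}$ ordering, after which $\mathrm{T}_{\mathcal{X}}=\mathrm{T}_{\mathcal{X}_1}\otimes\mathrm{T}_{\mathcal{X}_2}$ transposes the combined $\mathcal{X}$ block to yield $E_{i'i}\otimes E_{k'k}\otimes E_{jj'}\otimes E_{ll'}$. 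The two expressions agree, and by linearity the identity holds for all $\Lambda$ and $\Xi$.

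I do not anticipate a genuine obstacle: the content is simply that partial transposition acts locally on its target factor and is indifferent to where the untouched factors sit, so it commutes with the factor-permuting swap once one accounts for the merging of $\mathcal{X}_1$ and $\mathcal{X}_2$ into $\mathcal{X}$. The only step demanding care is the tracking of tensor-factor ordering together with the convention, used implicitly throughout the paper, that the transpose on the composite space $\mathcal{X}$ is taken in the product of the standard bases of $\mathcal{X}_1$ and $\mathcal{X}_2$, which is precisely what legitimizes the factorization $\mathrm{T}_{\mathcal{X}}=\mathrm{T}_{\mathcal{X}_1}\otimes\mathrm{T}_{\mathcal{X}_2}$.
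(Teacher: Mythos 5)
Your proposal is correct and follows essentially the same route as the paper: both arguments reduce to expanding $\Lambda$ and $\Xi$ in matrix units of the standard bases and verifying by index bookkeeping that applying the partial transposes before or after the swap yields the identical operator $E_{i'i}\otimes E_{k'k}\otimes E_{jj'}\otimes E_{ll'}$. Your version merely makes the linearity reduction explicit up front (and isolates the factorization ${\rm T}_{\mathcal{X}}={\rm T}_{\mathcal{X}_{1}}\otimes{\rm T}_{\mathcal{X}_{2}}$ as a named fact, which the paper uses implicitly in its final step), so the two proofs are substantively the same.
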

\begin{proof}
Let $\left\{ \left|x_{i}\right\rangle :i=1,\dots,d\right\} $ be the
standard basis of $\mathcal{X}_{1}$ and $\mathcal{X}_{2}$ and $\left\{ \left|y_{i}\right\rangle :i=1,\dots,d\right\} $
be the standard bases of $\mathcal{Y}_{1}$ and $\mathcal{Y}_{2}$.
Linear operators $\Lambda$ acting on $\mathcal{X}_{1}\otimes\mathcal{Y}_{1}$
and $\Xi$ on $\mathcal{X}_{2}\otimes\mathcal{Y}_{2}$ can be written
as
\begin{alignat}{1}
\Lambda & =\sum_{a,b,c,e}q_{ab}^{ce}\left|x_{a}\right\rangle \left\langle x_{b}\right|\otimes\left|y_{c}\right\rangle \left\langle y_{e}\right|,\label{|Lambda}\\
\Xi & =\sum_{\alpha,\beta,\gamma,\delta}q_{\alpha\beta}^{\gamma\delta}\left|x_{\alpha}\right\rangle \left\langle x_{\beta}\right|\otimes\left|y_{\gamma}\right\rangle \left\langle y_{\delta}\right|,\label{Xi}
\end{alignat}
where each index runs from $1$ to $d$. 

Therefore, 
\begin{alignat}{1}
\Lambda\otimes\Xi & =\sum_{\bm{i}}q_{ab}^{ce}q_{\alpha\beta}^{\gamma\delta}\left|x_{a}\right\rangle \left\langle x_{b}\right|\otimes\left|y_{c}\right\rangle \left\langle y_{e}\right|\otimes\left|x_{\alpha}\right\rangle \left\langle x_{\beta}\right|\otimes\left|y_{\gamma}\right\rangle \left\langle y_{\delta}\right|,\label{lambdaXxi}
\end{alignat}
where $\bm{i}=\left\{ a,b,c,e,\alpha,\beta,\gamma,\delta\right\} $
is our notation for the collection of indices and will be used in
the rest of the proof. 

Let us now compute the LHS of \eqref{identity-1-1}. First, 
\begin{alignat}{1}
{\rm \left(T_{\mathcal{X}_{1}}\otimes{\rm T}_{\mathcal{X}_{2}}\right)}\left(\Lambda\otimes\Xi\right) & ={\rm T_{\mathcal{X}_{1}}}\left(\Lambda\right)\otimes{\rm {\rm T}_{\mathcal{X}_{2}}}\left(\Xi\right),\nonumber \\
 & =\sum_{\bm{i}}q_{ab}^{ce}q_{\alpha\beta}^{\gamma\delta}\left|x_{b}\right\rangle \left\langle x_{a}\right|\otimes\left|y_{c}\right\rangle \left\langle y_{e}\right|\otimes\left|x_{\beta}\right\rangle \left\langle x_{\alpha}\right|\otimes\left|y_{\gamma}\right\rangle \left\langle y_{\delta}\right|.\label{T(lambdaXXi)}
\end{alignat}
Next, swapping $\mathcal{Y}_{1}$ and $\mathcal{X}_{2}$ gives
\begin{alignat}{1}
\text{LHS of \eqref{identity-1-1}} & =\sum_{\bm{i}}q_{ab}^{ce}q_{\alpha\beta}^{\gamma\delta}\left|x_{b}\right\rangle \left\langle x_{a}\right|\otimes\left|x_{\beta}\right\rangle \left\langle x_{\alpha}\right|\otimes\left|y_{c}\right\rangle \left\langle y_{e}\right|\otimes\left|y_{\gamma}\right\rangle \left\langle y_{\delta}\right|,\nonumber \\
 & =\sum_{\bm{i}}q_{ab}^{ce}q_{\alpha\beta}^{\gamma\delta}\left|x_{b}x_{\beta}\right\rangle \left\langle x_{a}x_{\alpha}\right|\otimes\left|y_{c}y_{\gamma}\right\rangle \left\langle y_{e}y_{\delta}\right|.\label{swap-T}
\end{alignat}
We now compute the RHS of \eqref{identity-1-1}. 
\begin{alignat}{1}
{\rm T}_{\mathcal{X}}\left[U_{\mathcal{Y}_{1}\leftrightarrow\mathcal{X}_{2}}\left(\Lambda\otimes\Xi\right)U_{\mathcal{Y}_{1}\leftrightarrow\mathcal{X}_{2}}^{\dagger}\right] & ={\rm T}_{\mathcal{X}}\left(\sum_{\bm{i}}q_{ab}^{ce}q_{\alpha\beta}^{\gamma\delta}\left|x_{a}\right\rangle \left\langle x_{b}\right|\otimes\left|x_{\alpha}\right\rangle \left\langle x_{\beta}\right|\otimes\left|y_{c}\right\rangle \left\langle y_{e}\right|\otimes\left|y_{\gamma}\right\rangle \left\langle y_{\delta}\right|\right),\nonumber \\
 & ={\rm T}_{\mathcal{X}}\left(\sum_{\bm{i}}q_{ab}^{ce}q_{\alpha\beta}^{\gamma\delta}\left|x_{a}x_{\alpha}\right\rangle \left\langle x_{b}x_{\beta}\right|\otimes\left|y_{c}y_{\gamma}\right\rangle \left\langle y_{e}y_{\delta}\right|\right),\nonumber \\
 & =\sum_{\bm{i}}q_{ab}^{ce}q_{\alpha\beta}^{\gamma\delta}\left|x_{b}x_{\beta}\right\rangle \left\langle x_{a}x_{\alpha}\right|\otimes\left|y_{c}y_{\gamma}\right\rangle \left\langle y_{e}y_{\delta}\right|.\label{T-swap}
\end{alignat}
From \eqref{swap-T} and \eqref{T-swap}, we see that
\begin{alignat}{1}
\text{LHS of \eqref{identity-1-1}} & =\text{RHS of \eqref{identity-1-1}}.\label{LHS=00003DRHS}
\end{alignat}
\end{proof}

\subsubsection*{Proof of Lemma \ref{H and Hprime}}

Suppose that 
\begin{alignat}{1}
{\rm \left(T_{\mathcal{A}_{1}}\otimes{\rm T}_{\mathcal{A}_{2}}\right)}\left(\mathbb{H}-\frac{1}{d^{2}}\Psi_{k}\otimes\tau\right) & \in{\rm Pos}\left(\mathcal{A}_{1}\otimes\mathcal{B}_{1}\otimes\mathcal{A}_{2}\otimes\mathcal{B}_{2}\right),\;k=1,\dots,d^{2},\label{positive-condition}
\end{alignat}
where $\mathbb{H}$ is a Hermitian operator of the form given by \eqref{Hprime-1}. 

Since swapping subsystems does not affect the positivity of an operator,
it holds that
\begin{alignat}{1}
U_{\mathcal{B}_{1}\leftrightarrow\mathcal{A}_{2}}\left[{\rm \left(T_{\mathcal{A}_{1}}\otimes{\rm T}_{\mathcal{A}_{2}}\right)}\left(\mathbb{H}-\frac{1}{d^{2}}\Psi_{k}\otimes\tau\right)\right]U_{\mathcal{B}_{1}\leftrightarrow\mathcal{A}_{2}}^{\dagger} & \in{\rm Pos}\left(\mathcal{A}\otimes\mathcal{B}\right),\;k=1,\dots,d^{2}.\label{SWAP(T1XT2)-positive-1}
\end{alignat}
The operator in \eqref{SWAP(T1XT2)-positive-1} can be expanded as
\begin{multline}
U_{\mathcal{B}_{1}\leftrightarrow\mathcal{A}_{2}}\left[{\rm \left(T_{\mathcal{A}_{1}}\otimes{\rm T}_{\mathcal{A}_{2}}\right)}\left(\mathbb{H}-\frac{1}{d^{2}}\Psi_{k}\otimes\tau\right)\right]U_{\mathcal{B}_{1}\leftrightarrow\mathcal{A}_{2}}^{\dagger}\\
=U_{\mathcal{B}_{1}\leftrightarrow\mathcal{A}_{2}}\left[{\rm \left(T_{\mathcal{A}_{1}}\otimes{\rm T}_{\mathcal{A}_{2}}\right)}\mathbb{H}\right]U_{\mathcal{B}_{1}\leftrightarrow\mathcal{A}_{2}}^{\dagger}-\frac{1}{d^{2}}U_{\mathcal{B}_{1}\leftrightarrow\mathcal{A}_{2}}\left[{\rm \left(T_{\mathcal{A}_{1}}\otimes{\rm T}_{\mathcal{A}_{2}}\right)}\left(\Psi_{k}\otimes\tau\right)\right]U_{\mathcal{B}_{1}\leftrightarrow\mathcal{A}_{2}}^{\dagger}.\label{expanded}
\end{multline}
We now apply Lemma \ref{swap-T=00003DT-swap} to each term on the
RHS of \eqref{expanded}. For the first term, we get 
\begin{alignat}{1}
U_{\mathcal{B}_{1}\leftrightarrow\mathcal{A}_{2}}\left[{\rm \left(T_{\mathcal{A}_{1}}\otimes{\rm T}_{\mathcal{A}_{2}}\right)}\mathbb{H}\right]U_{\mathcal{B}_{1}\leftrightarrow\mathcal{A}_{2}}^{\dagger} & ={\rm T}_{\mathcal{A}}\left(U_{\mathcal{B}_{1}\leftrightarrow\mathcal{A}_{2}}\mathbb{H}U_{\mathcal{B}_{1}\leftrightarrow\mathcal{A}_{2}}^{\dagger}\right),\label{identity-1}
\end{alignat}
and for the second 
\begin{alignat}{1}
\frac{1}{d^{2}}U_{\mathcal{B}_{1}\leftrightarrow\mathcal{A}_{2}}\left[{\rm \left(T_{\mathcal{A}_{1}}\otimes{\rm T}_{\mathcal{A}_{2}}\right)}\left(\Psi_{k}\otimes\tau\right)\right]U_{\mathcal{B}_{1}\leftrightarrow\mathcal{A}_{2}}^{\dagger} & =\frac{1}{d^{2}}{\rm T}_{\mathcal{A}}\left(\Phi_{k}\right),\;k=1,\dots,d^{2},\label{identity-2}
\end{alignat}
where to arrive at \eqref{identity-2} we have used $\Phi_{k}=U_{\mathcal{B}_{1}\leftrightarrow\mathcal{A}_{2}}\left(\Psi_{k}\otimes\tau\right)U_{\mathcal{B}_{1}\leftrightarrow\mathcal{A}_{2}}^{\dagger}$.

Now using \eqref{identity-1} and \eqref{identity-2}, we can write
\eqref{expanded} as
\begin{alignat}{1}
U_{\mathcal{B}_{1}\leftrightarrow\mathcal{A}_{2}}\left[{\rm \left(T_{\mathcal{A}_{1}}\otimes{\rm T}_{\mathcal{A}_{2}}\right)}\left(\mathbb{H}-\frac{1}{d^{2}}\Psi_{k}\otimes\tau\right)\right]U_{\mathcal{B}_{1}\leftrightarrow\mathcal{A}_{2}}^{\dagger} & ={\rm T}_{\mathcal{A}}\left(U_{\mathcal{B}_{1}\leftrightarrow\mathcal{A}_{2}}\mathbb{H}U_{\mathcal{B}_{1}\leftrightarrow\mathcal{A}_{2}}^{\dagger}-\Phi_{k}/d^{2}\right).\label{SWAP-t=00003DT-SWAP}
\end{alignat}
Since the LHS of \eqref{SWAP-t=00003DT-SWAP} is positive semidefinite
{[}see \eqref{SWAP(T1XT2)-positive-1}{]}, the RHS must also be so,
i.e., 
\begin{alignat}{1}
{\rm T}_{\mathcal{A}}\left(U_{\mathcal{B}_{1}\leftrightarrow\mathcal{A}_{2}}\mathbb{H}U_{\mathcal{B}_{1}\leftrightarrow\mathcal{A}_{2}}^{\dagger}-\Phi_{k}/d^{2}\right) & \in{\rm Pos}\left(\mathcal{A}\otimes\mathcal{B}\right),\;k=1,\dots,d^{2}.\label{final-condition-2}
\end{alignat}
Defining $H$ as 
\begin{alignat}{1}
H & =U_{\mathcal{B}_{1}\leftrightarrow\mathcal{A}_{2}}\mathbb{H}U_{\mathcal{B}_{1}\leftrightarrow\mathcal{A}_{2}}^{\dagger}\in{\rm Herm}\left(\mathcal{A}\otimes\mathcal{B}\right),\label{H and H}
\end{alignat}
\eqref{final-condition-2} takes the form 
\begin{alignat}{1}
{\rm T}_{\mathcal{A}}\left(H-\Phi_{k}/d^{2}\right) & \in{\rm Pos}\left(\mathcal{A}\otimes\mathcal{B}\right),\;k=1,\dots,d^{2}.\label{final-condition-1}
\end{alignat}
Thus the feasibility condition of the dual problem is satisfied for
an $H$ defined in \eqref{H and H}, provided the operator ${\rm \left(T_{\mathcal{A}_{1}}\otimes{\rm T}_{\mathcal{A}_{2}}\right)}\left(\mathbb{H}-\frac{1}{d^{2}}\Psi_{k}\otimes\tau\right)$
is positive semidefinite. This completes the proof.


\begin{thebibliography}{99}
\bibitem{Chefles-review-2000} A. Chefles, Quantum state discrimination,
\href{https://www.tandfonline.com/doi/abs/10.1080/00107510010002599}{Contemp. Phys. {\bf41}, 401 (2000)}. 

\bibitem{BC-review-2009} S. M. Barnett and S. Croke, Quantum state
discrimination, \href{https://opg.optica.org/aop/fulltext.cfm?uri=aop-1-2-238&id=176580}{Adv. Opt. Photon. {\bf1}, 238 (2009)}.

\bibitem{Bergou-2010} J. A. Bergou, Discrimination of quantum states,
\href{https://www.tandfonline.com/doi/full/10.1080/09500340903477756}{Journal of Modern Optics, {\bf57}, 160 (2010)}.

\bibitem{Bae-Kwek-review-2015} J. Bae and L-C. Kwek, Quantum state
discrimination and its applications, \href{https://iopscience.iop.org/article/10.1088/1751-8113/48/8/083001/meta}{J. Phys. A: Math. Theor. {\bf48}, 083001 (2015)}.

\bibitem{Peres-Wootters-1991} A. Peres and W. K. Wootters, Optimal
detection of quantum information, \href{https://doi.org/10.1103/PhysRevLett.66.1119}{Phys.
Rev. Lett. \textbf{66}, 1119 (1991)}.

\bibitem{ben99} C. H. Bennett, D. P. DiVincenzo, C. A. Fuchs, T.
Mor, E. Rains, P. W. Shor, J. A. Smolin, and W. K. Wootters, Quantum
nonlocality without entanglement, \href{https://doi.org/10.1103/PhysRevA.59.1070}{Phys.
Rev. A \textbf{59}, 1070 (1999)}.

\bibitem{ben99u} C. H. Bennett, D. P. DiVincenzo, T. Mor, P. W. Shor,
J. A. Smolin, and B. M. Terhal, Unextendible product bases and bound
entanglement, \href{https://doi.org/10.1103/PhysRevLett.82.5385}{Phys. Rev. Lett. {\bf 82}, 5385 (1999)}.

\bibitem{Walgate-2000} J. Walgate, A. J. Short, L. Hardy, and V.
Vedral, Local distinguishability of multipartite orthogonal quantum
states, \href{https://doi.org/10.1103/PhysRevLett.85.4972}{Phys.
Rev. Lett. \textbf{85}, 4972 (2000)}.

\bibitem{Virmani-2001} S. Virmani, M. F. Sacchi, M. B. Plenio, D.
Markham, Optimal local discrimination of two multipartite pure states,
\href{https://doi.org/10.1016/S0375-9601(01)00484-4}{Phys. Lett.
A. \textbf{288}, 62 (2001)}.

\bibitem{Ghosh-2001} S. Ghosh, G. Kar, A. Roy, A. Sen (De), and U.
Sen, Distinguishability of Bell states, \href{https://doi.org/10.1103/PhysRevLett.87.277902}{Phys.
Rev. Lett. \textbf{87}, 277902 (2001)}.

\bibitem{walgate-2002} J. Walgate and L. Hardy, Nonlocality, asymmetry,
and distinguishing bipartite states, \href{https://doi.org/10.1103/PhysRevLett.89.147901}{Phys.
Rev. Lett. \textbf{89}, 147901 (2002)}.

\bibitem{HSSH-2003} M. Horodecki, A. Sen(De), U. Sen, and K. Horodecki,
Local indistinguishability: More nonlocality with less entanglement,
\href{https://doi.org/10.1103/PhysRevLett.90.047902}{Phys. Rev. Lett.
\textbf{90}, 047902 (2003)}.

\bibitem{divin03} D. P. DiVincenzo, T. Mor, P. W. Shor, J. A. Smolin,
and B. M. Terhal, Unextendible product bases, uncompletable product
bases and bound entanglement, \href{https://doi.org/10.1007/s00220-003-0877-6}{Commun.
Math. Phys. \textbf{238}, 379 (2003)}.

\bibitem{Ghosh-2004} S. Ghosh, G. Kar, A. Roy, and D. Sarkar, Distinguishability
of maximally entangled states, \href{https://doi.org/10.1103/PhysRevA.70.022304}{Phys.
Rev. A \textbf{70}, 022304 (2004)}.

\bibitem{fan-2005} H. Fan, Distinguishability and indistinguishability
by local operations and classical communication, \href{https://doi.org/10.1103/PhysRevLett.92.177905}{Phys.
Rev. Lett. \textbf{92}, 177905 (2004)}.

\bibitem{Watrous-2005} J. Watrous, Bipartite subspaces having no
bases distinguishable by local operations and classical communication,
\href{https://doi.org/10.1103/PhysRevLett.95.080505}{Phys. Rev. Lett.
\textbf{95}, 080505 (2005)}.

\bibitem{Nathanson-2005} M. Nathanson, Distinguishing bipartite orthogonal
states by LOCC: Best and worst cases, \href{https://doi.org/10.1063/1.1914731}{Journal
of Mathematical Physics \textbf{46}, 062103 (2005)}.

\bibitem{Wootters-2006} W. K. Wootters, Distinguishing unentangled
states with an unentangled measurement, \href{https://doi.org/10.1142/S0219749906001724}{Int.
J. Quantum Inf. \textbf{4}, 219 (2006)}.

\bibitem{Hayashi-etal-2006} M. Hayashi, D. Markham, M. Murao, M.
Owari, and S. Virmani, Bounds on entangled orthogonal state discrimination
using local operations and classical communication, \href{https://doi.org/10.1103/PhysRevLett.96.040501}{Phys.
Rev. Lett. \textbf{96}, 040501 (2006)}.

\bibitem{nis06} J. Niset and N. J. Cerf, Multipartite nonlocality
without entanglement in many dimensions, \href{https://doi.org/10.1103/PhysRevA.74.052103}{Phys.
Rev. A \textbf{74}, 052103 (2006)}.

\bibitem{Duan2007} R. Y. Duan, Y. Feng, Z. F. Ji, and M. S. Ying,
Distinguishing arbitrary multipartite basis unambiguously using local
operations and classical communication, \href{https://doi.org/10.1103/PhysRevLett.98.230502}{Phys.
Rev. Lett. \textbf{98}, 230502 (2007)}.

\bibitem{feng09} Y. Feng and Y.-Y. Shi, Characterizing locally indistinguishable
orthogonal product states, \href{https://doi.org/10.1109/TIT.2009.2018330}{IEEE
Trans. Inf. Theory \textbf{55}, 2799 (2009)}.

\bibitem{Duan-2009} R. Y. Duan, Y. Feng, Y. Xin, and M. S. Ying,
Distinguishability of quantum states by separable operations, \href{https://doi.org/10.1109/TIT.2008.2011524}{IEEE
Trans. Inf. Theory \textbf{55}, 1320 (2009)}.

\bibitem{Calsamiglia-2010} J. Calsamiglia, J. I. de Vicente, R. Munoz-Tapia,
E. Bagan, Local discrimination of mixed states, \href{https://doi.org/10.1103/PhysRevLett.105.080504}{Phys.
Rev. Lett. \textbf{105}, 080504 (2010)}.

\bibitem{BGK-2011} S. Bandyopadhyay, S. Ghosh and G. Kar, LOCC distinguishability
of unilaterally transformable quantum states, \href{https://doi.org/10.1088/1367-2630/13/12/123013}{New
J. Phys. \textbf{13}, 123013 (2011)}.

\bibitem{Bandyo-2011} S. Bandyopadhyay, More nonlocality with less
purity, \href{https://doi.org/10.1103/PhysRevLett.106.210402}{Phys.
Rev. Lett. \textbf{106}, 210402 (2011)}.

\bibitem{Yu-Duan-2012} N. Yu, R. Duan, and M. Ying, Four locally
indistinguishable ququad-ququad orthogonal maximally entangled states,
\href{https://doi.org/10.1103/PhysRevLett.109.020506}{Phys. Rev.
Lett. \textbf{109}, 020506 (2012)}.

\bibitem{Cosentino-2013} A. Cosentino, Positive-partial-transpose-indistinguishable
states via semidefinite programming, \href{https://doi.org/10.1103/PhysRevA.87.012321}{Phys.
Rev. A \textbf{87}, 012321 (2013).}

\bibitem{Cosentino-Russo-2014} A. Cosentino and V. Russo, Small sets
of locally indistinguishable orthogonal maximally entangled states,
\href{https://dl.acm.org/doi/abs/10.5555/2685164.2685167}{Quantum Information and  Computation {\bf14}, 1098 (2014)}.

\bibitem{BN-2013} S. Bandyopadhyay and M. Nathanson, Tight bounds
on the distinguishability of quantum states under separable measurements,
\href{https://doi.org/10.1103/PhysRevA.88.052313}{Phys. Rev. A {\bf 88}, 052313 (2013)}.

\bibitem{B-IQC-2015} S. Bandyopadhyay, A. Cosentino, N. Johnston,
V. Russo, J. Watrous, and N. Yu, Limitations on separable measurements
by convex optimization, \href{https://ieeexplore.ieee.org/abstract/document/7086052}{IEEE Trans. Inf. Theory {\bf 61}, 3593 (2015)}.

\bibitem{halder-2018} S. Halder, Several nonlocal sets of multipartite
pure orthogonal product states, \href{https://journals.aps.org/pra/abstract/10.1103/PhysRevA.98.022303}{Phys. Rev. A {\bf 98}, 022303 (2018)}.

\bibitem{Halder+-2019} S. Halder, M. Banik, S. Agrawal, and S. Bandyopadhyay,
Strong quantum nonlocality without entanglement, \href{https://journals.aps.org/prl/abstract/10.1103/PhysRevLett.122.040403}{Phys. Rev. Lett. {\bf 122}, 040403 (2019)}.

\bibitem{Yu-duan-PPT-2014} N. Yu, R. Duan, and M. Ying, Distinguishability
of quantum states by positive operator-valued measures with positive
partial transpose, \href{https://ieeexplore.ieee.org/abstract/document/6747300}{IEEE Trans. Inf. Theory {\bf 60}, 2069 (2014)}.

\bibitem{LOCC} E. Chitambar, D. Leung, L. Mancinska, M. Ozols, and
A. Winter, Everything you always wanted to know about LOCC (but were
afraid to ask),  \href{https://link.springer.com/10.1007/s00220-014-1953-9}{Commun. Math. Phys. {\bf 328}, 303 (2014).} 

\bibitem{Navascues-2008} M. Navascués, Pure state estimation and
the characterization of entanglement, \href{https://journals.aps.org/prl/abstract/10.1103/PhysRevLett.100.070503}{Phys. Rev. Lett. {\bf 100}, 070503  (2008)}. 

\bibitem{Terhal2001} B. M. Terhal, D. P. DiVincenzo, and D. W. Leung,
Hiding bits in Bell states, \href{https://journals.aps.org/prl/abstract/10.1103/PhysRevLett.86.5807}{Phys. Rev. Lett. {\bf 86}, 5807 (2001)}.

\bibitem{DiVincenzo2002} D. P. DiVincenzo, D. W. Leung, and B. M.
Terhal, Quantum data hiding, \href{https://ieeexplore.ieee.org/abstract/document/985948}{IEEE Trans. Inf. Theory {\bf 48}, 580 (2002)}.

\bibitem{Eggeling2002} T. Eggeling, and R. F. Werner, Hiding classical
data in multipartite quantum states, \href{https://journals.aps.org/prl/abstract/10.1103/PhysRevLett.89.097905}{Phys. Rev. Lett. {\bf 89}, 097905 (2002)}.

\bibitem{MatthewsWehnerWinter} W. Matthews, S. Wehner, A. Winter,
Distinguishability of quantum states under restricted families of
measurements with an application to quantum data hiding, \href{https://link.springer.com/article/10.1007/s00220-009-0890-5}{Commun. Math. Phys. {\bf 291},  831 (2009)}.

\bibitem{Markham-Sanders-2008} D. Markham and B. C. Sanders, Graph
states for quantum secret sharing, \href{https://journals.aps.org/pra/abstract/10.1103/PhysRevA.78.042309}{Phys. Rev. A {\bf 78}, 042309 (2008)}.

\bibitem{Entanglement-horodecki} R. Horodecki, P. Horodecki, M. Horodecki,
and K. Horodecki, Quantum entanglement, \href{https://doi.org/10.1103/RevModPhys.81.865}{Rev.
Mod. Phys. \textbf{81}, 865 (2009).}

\bibitem{Cohen-2008} S. M. Cohen, Understanding entanglement as resource:
Locally distinguishing unextendible product bases, \href{https://journals.aps.org/pra/abstract/10.1103/PhysRevA.77.012304}{Phys. Rev. A {\bf 77}, 012304 (2008)}.

\bibitem{BGKW-2009} S. Bandyopadhyay, G. Brassard, S. Kimmel, and
W. K. Wootters, Entanglement cost of nonlocal measurements, \href{https://journals.aps.org/pra/abstract/10.1103/PhysRevA.80.012313}{Phys. Rev. A {\bf 80}, 012313 (2009)}. 

\bibitem{BRW-2010} S. Bandyopadhyay, R. Rahaman, and W. K.Wootters,
Entanglement cost of two-qubit orthogonal measurements, \href{https://iopscience.iop.org/article/10.1088/1751-8113/43/45/455303/meta}{J. Phys. A: Math. Theor. {\bf 43}, 455303 (2010)}.

\bibitem{BHN-2016} S. Bandyopadhyay, S. Halder, and M. Nathanson,
Entanglement as a resource for local state discrimination in multipartite
systems, \href{https://journals.aps.org/pra/abstract/10.1103/PhysRevA.94.022311}{Phys. Rev. A {\bf 94}, 022311 (2016)}. 

\bibitem{BHN-2018} S. Bandyopadhyay, S. Halder, and M. Nathanson,
Optimal resource states for local state discrimination, \href{https://journals.aps.org/pra/abstract/10.1103/PhysRevA.97.022314}{Phys. Rev. A {\bf 97}, 022314 (2018)}.

\bibitem{teleportation} C. H. Bennett, G. Brassard, C. Crépeau, R.
Jozsa, A. Peres, and W. K. Wootters, Teleporting an unknown quantum
state via dual classical and Einstein-Podolsky-Rosen channels, \href{https://journals.aps.org/prl/abstract/10.1103/PhysRevLett.70.1895}{Phys. Rev. Lett. {\bf70}, 1895 (1993)}. 

\bibitem{LJ-2020} B. Lovitz and N. Johnston, Entangled subspaces
and generic local state discrimination with pre-shared entanglement,
\href{https://arxiv.org/abs/2010.02876}{arXiv:2010.02876 2020}. 

\bibitem{BR-2021} S. Bandyopadhyay and V. Russo, Entanglement cost
of discriminating noisy Bell states by local operations and classical
communication, \href{https://journals.aps.org/pra/abstract/10.1103/PhysRevA.104.032429}{Phys. Rev. A {\bf104}, 032429 (2021)}.

\bibitem{H-H-1999} M. Horodecki and P. Horodecki, Reduction criterion
of separability and limits for a class of distillation protocols,
\href{https://journals.aps.org/pra/abstract/10.1103/PhysRevA.59.4206}{ Phys. Rev. A {\bf59}, 4206 (1999)}.

\bibitem{VW-2002} G. Vidal and R. F. Werner, Computable measure of
entanglement, \href{https://journals.aps.org/pra/abstract/10.1103/PhysRevA.65.032314}{Phys. Rev. A {\bf 65}, 032314 (2002)}. 
\end{thebibliography}
\end{document}